\newenvironment{proof}{{\indent  \indent \it Proof:}}{\hfill $\blacksquare$}
\begin{document}
\title{Throughput Maximization for UAV-enabled Integrated Periodic Sensing and Communication}

\newtheorem{thm}{\bf Lemma}
\newtheorem{remark}{\bf Remark}
\newtheorem{Pro}{\bf Proposition} 
\author{
	Kaitao Meng, \textit{Member, IEEE}, Qingqing Wu, \textit{Senior Member, IEEE}, Shaodan Ma, \textit{Senior Member, IEEE}, Wen Chen, \textit{Senior Member, IEEE}, Kunlun Wang, \textit{Member, IEEE}, and Jun Li, \textit{Senior Member, IEEE}
	\thanks{This work was supported in part by the FDCT under Grant 0119/2020/A3, GDST under Grant 2020B1212030003, STDF Macau SAR  No. 0036/2019/A1, the National key project 2020YFB1807700 and 2018YFB1801102, and NSFC Grant No. 61872184.}
	\thanks{{K. Meng, Q. Wu, and S. Ma are with the State Key Laboratory of Internet of Things for Smart City, University of Macau, Macau, 999078, China. }{(emails: \{kaitaomeng, qingqingwu, shaodanma\}@um.edu.mo).} W. Chen is with the Department of Electronic Engineering, Shanghai Jiao Tong University, Shanghai 201210, China (email: wenchen@sjtu.edu.cn). K. Wang is with the School of Communication and Electronic Engineering, East China Normal University, Shanghai 200241, China (email: klwang@cee.ecnu.edu.cn). J. Li is with the School of Electronic and Optical Engineering, Nanjing University of Science Technology, Nanjing 210094, China (email: jun.li@njust.edu.cn). }
}

\maketitle


\begin{abstract}
	Driven by unmanned aerial vehicle (UAV)'s advantages of flexible observation and enhanced communication capability, it is expected to revolutionize the existing integrated sensing and communication (ISAC) system and promise a more flexible joint design. Nevertheless, the existing works on ISAC mainly focus on exploring the performance of both functionalities simultaneously during the entire considered period, which may ignore the practical asymmetric sensing and communication requirements. In particular, always forcing sensing along with communication may make it is harder to balance between these two functionalities due to shared spectrum resources and limited transmit power. To address this issue, we propose a new integrated periodic sensing and communication (IPSAC) mechanism for the UAV-enabled ISAC system to provide a more flexible trade-off between two integrated functionalities. Specifically, the system achievable rate is maximized via jointly optimizing UAV trajectory, user association, target sensing selection, and transmit beamforming, while meeting the sensing frequency and beam pattern gain requirement for the given targets. Despite that this problem is highly non-convex and involves closely coupled integer variables, we derive the closed-form optimal beamforming vector to dramatically reduce the complexity of beamforming design, and present a tight lower bound of the achievable rate to facilitate UAV trajectory design. Based on the above results, we propose a two-layer penalty-based algorithm to efficiently solve the considered problem. To draw more important insights, the optimal achievable rate and the optimal UAV location are analyzed under a special case of infinity number of antennas. Furthermore, we prove the structural symmetry between the optimal solutions in different ISAC frames without location constraints in our considered UAV-enabled ISAC system. Based on this, we propose an efficient algorithm for solving the problem with location constraints. Numerical results validate the effectiveness of our proposed designs and also unveil a more flexible trade-off in ISAC systems over benchmark schemes.
\end{abstract}
\begin{IEEEkeywords}
Integrated sensing and communication, UAV, periodic sensing, user association, beamforming, trajectory optimization.
\end{IEEEkeywords}

\section{Introduction}
\par
Driven by spectrum reuse potential and enormous demands of robust sensing ability, there is a recent surge of interest in the development of integrated (radar) sensing and communications (ISAC) techniques for both academia and industry \cite{Hassanien2019Dual}, \cite{Ericsson2020}. Different from the spectrum sharing between separate radar sensing and communication systems \cite{Li2016Optimum}, ISAC shares the same wireless infrastructures for simultaneously conveying information to the receiver and extracting information from the scattered echoes \cite{Yuan2021Integrated}. Thus, ISAC could not only achieve integration gain to significantly enhance the spectrum utilization efficiency and reduce hardware costs, but also introduce coordination gain to efficiently balance between two functionalities' performance \cite{Yuan2021Integrated, Zhang2021Overview}. With the advancements of massive antennas and millimeter wave (mmWave)/terahertz (THz), ISAC base stations (BSs) could also provide higher sensing resolution and accuracy to enable many location-aware intelligent applications with stringent sensing requirements \cite{Godrich2010Target}. Several similar terminologies have been utilized to describe this related research, such as radar-communication (RadCom) \cite{Sturm2011Waveform, Oliveira2021Joint}, dual-functional radar communication (DFRC) \cite{Liu2018Toward, Liu2021Dual}, joint communication and radar sensing (JCAS) \cite{Mishra2019Toward, Zhang2019Multibeam}. In the industry, ISAC is regarded as a key technology in Huawei and Nokia for future wireless network investigations \cite{Pin2021Integrated, Wild2021Joint}; "Hexa-X" project supported by European Commission focuses on extending the localization and sensing capabilities for 6G \cite{Wymeersch2021Integration}; Project IEEE 802.11bf plans to develop WLAN sensing by analyzing the received WLAN signals to recognize the features of the intended targets in a given environment \cite{WLANSENSING}.
\par
The prior works on ISAC systems have shown that co-designed waveform and beamforming could provide mutual benefits of both sensing and communication \cite{Zhang2021Design, Cui2021Integrating, Liu2020JointRadar, Chen2021Radar, Liu2018MIMO}. For instance, a joint transmit beamforming model was proposed to optimize the radar transmit beam pattern while meeting the requirement of the signal-to-interference-plus-noise ratio (SINR) at each communication user \cite{Liu2020JointRadar}. The authors in \cite{Chen2021Radar} proposed a Pareto optimization framework of the DFRC system to analyze the achievable performance region of communication and sensing. However, the performance of sensing is generally dependent on the explicit line-of-sight (LoS) links between targets and transceivers, while non-Los (NLoS) links are treated as unfavorable interference for the target sensing. For the potential targets located far away from BSs or blocked by obstacles, the sensing performance will severely degrade or the sensing missions may even fail because of serious path loss of the echoed signals. Hence, terrestrial ISAC BSs could only provide sensing and communication services within a fixed range due to limited transmit power and NLoS signal paths caused by surrounding obstacles.
\par
Driven by the unmanned aerial vehicle (UAV)' on-demand deployment and strong LoS links features \cite{Gupta2016Survey, Hua2019Energy}, it is expected to be a cost-effective aerial platform to provide enhanced ISAC service. In particular, more flexible observation, better communication quality, larger service coverage could be achieved by exploiting the high mobility of UAVs \cite{Wu2019Fundamental, Hua2018Power}. Traditional works on UAV-enabled wireless networks mainly focused on the separate design of sensing and/or communication \cite{Meng2021Space, Zhang2020Age, Meng2021Sensing}, instead of considering integrated waveform and beamforming design for sensing and communication. Different from the separate-design sensing and communication systems, the achievable rate for the UAV-enabled ISAC system is influenced by multiple complicated factors, including beam pattern constraints, resource allocation, as well as beamforming design closely coupled with UAV trajectory. Therefore, this difference leads to a new challenge for the achievable rate maximization problem in UAV-enabled ISAC systems. Most recently, there are several works studying the trajectory or deployment optimization issue in UAV-based ISAC \cite{lyu2021joint, Chen2020Performance, wang2021qos}. For instance, the authors in \cite{lyu2021joint} proposed a joint UAV maneuver and transmit beamforming optimization algorithm to maximize the communication performance while ensuring the sensing requirements for the given targets. By deploying multiple UAVs to perform tasks cooperatively, greater coverage of ISAC networks can be achieved \cite{Chen2020Performance}. Besides, ISAC-enabled cellular networks can be utilized to monitor and localize the suspicious UAV targets in the sky to protect the physical security \cite{wei2021safeguarding}. 
\par
However, the above works on ISAC \cite{Zhang2021Design, Cui2021Integrating, Liu2020JointRadar, Chen2021Radar, Liu2018MIMO, lyu2021joint, Chen2020Performance, wei2021safeguarding} mainly focused on exploring the performance of both functionalities simultaneously during the entire considered period, where all sensing tasks are performed together with communication all the time. This may ignore the asymmetric sensing and communication requirements in practical systems. In other words, the sensing frequency could be different from the data frame rate. For example, for target tracking scenarios, a relatively low/high sensing frequency is preferred for a low-speed/high-speed object. Hence, sensing frequency should be set based on the targets' motion state and the timeliness requirement of the specific tasks. Nonetheless, this important aspect of ISAC systems, sensing frequency, has not been taken into account in the literature. On the other hand, always forcing sensing along with communication all the time may introduce excessive sensing, making it is harder to balance between these two functionalities. Furthermore, excessive sensing may result in the waste of spectrum resources and stronger interference to communication users, thereby limiting the performance of communication users. Moreover, forcing both functionalities to work simultaneously will also inevitably cause higher energy consumption, which is unfavorable for the equipment with insufficient energy (e.g., power limited UAVs \cite{Wu2021Comprehensive}). Therefore, there is an urgent need to investigate the achievable rate improvement in such scenarios by considering the sensing frequency besides the commonly used sensing power, especially for UAV-enabled ISAC systems due to its autonomous mobility. Note that the fixed-deployment ISAC system considering the sensing frequency is actually a special case of our work. By optimizing the UAV trajectory, the flexibility of beam design and the efficiency of task association for ISAC systems can be further improved. This knowledge gap motivates us to develop effective UAV-enabled ISAC mechanisms to fulfill a more general and flexible trade-off between sensing and communication.
\par
With the above consideration, we study a UAV-enabled ISAC system where one UAV is dispatched to perform sensing tasks while providing downlink communication services for several single-antenna users, as shown in Fig.~{\ref{figure1}}. Considering the practical sensing frequency requirements, we propose an integrated periodic sensing and communication (IPSAC) mechanism where all sensing tasks are periodically executed along with the communication service. Specifically, the achievable rate maximization problem is investigated by jointly optimizing the transmit beamforming, user association, sensing time selection, and UAV trajectory in this work, subject to the sensing frequency and beam pattern gain requirements. As compared to traditional ISAC considered in \cite{lyu2021joint}, which always forces the UAV to perform sensing tasks and provide communication service at the same time, our proposed scheme is more general and offers more flexibility to balance between practical sensing and communication over time. Besides, by setting the frequency to infinity or the minimum threshold, it is not difficult to find that both standalone communication and always-sensing are special cases of our considered periodic sensing and communication scenarios. 
\par 
\begin{figure}[t]
	\centering
	\setlength{\abovecaptionskip}{0.cm}
	\includegraphics[width=8cm]{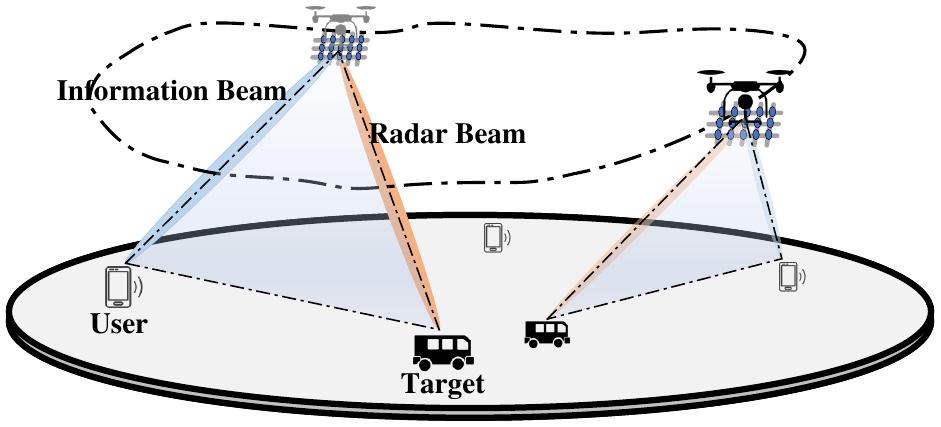}
	\caption{The illustration of UAV-enabled integrated sensing and communication scenarios.}
	\label{figure1}
\end{figure}
\par 
However, solving this periodic ISAC optimization problem is highly non-trivial. Specifically, it is non-convex and involves integer variables which are closely coupled with UAV trajectory and beamforming vectors. Unlike traditional trajectory optimization problem for single-antenna UAVs, joint beamforming and UAV trajectory optimization problem for ISAC is very complicated, since the location of the UAV is coupled with beamforming vector in a more complex form. Also, the complexity of the trajectory discretization-based method will become intractable in practical scenarios with long mission periods \cite{Wu2018Common}. To address this issue, we first propose a two-layer penalty-based algorithm to solve the achievable rate maximization problem by decoupling the optimized variables and then propose a low-complexity algorithm to solve the considered problem more efficiently. The main contribution in this paper is summarized as follow: 
\begin{itemize}
	\item First, we propose a UAV-enabled IPSAC mechanism to achieve a more general and flexible trade-off between sensing power requirement, sensing frequency, and communication performance for multi-users and multi-targets scenarios. Furthermore, we formulate a periodic ISAC problem to maximize the achievable rate while satisfying sensing frequency and beam pattern gain constraints.
	\item Next, we derive the closed-form beamforming vector under any given UAV location, and present the closed-form optimal achievable rate and sensing location if the number of antennas is infinity, thereby providing guidance for algorithm design. By introducing a tight lower bound of the original objective function, a penalty-based algorithm is proposed to jointly optimize beamforming, user association, sensing time selection, and UAV trajectory.
	\item Furthermore, to draw useful insights, we prove a novel characteristic of structural symmetry between the optimal solutions in different ISAC frames without initial and final location constraints. Accordingly, we reveal the monotonic relationship between sensing frequency and communication capacity in our considered IPSAC system. Based on this, a low-complexity solution can be constructed while achieving high-quality performance. 
	\item Finally, simulation results unveil a more flexible trade-off in ISAC systems over benchmark schemes and show that the UAV trajectory design plays an important role in balancing sensing and communication performance in IPSAC mechanisms. It is also found that the UAV tends to provide communication services while sensing the target closer to the associated user.
\end{itemize}

The remainder of this paper is organized as follows. Section \ref{SYSTEM} introduces the system model and problem formulation of the UAV-enabled IPSAC system. In Section \ref{AnalysisOfOptimalSolution}, we derive the closed-form optimal beamforming vector and propose a penalty-based algorithm to address the sum achievable rate maximization problem. Section \ref{WithoutLocationConstraint} presents the symmetrical structure characteristic among ISAC frames and a low-complexity algorithm. Section \ref{Simulations} provides numerical results to validate the performance of our proposed mechanism. Section \ref{Conclusion} concludes this paper.

\textit{Notations}: $\|{\bm{x}} \|$ denotes the Euclidean norm of a complex-valued vector ${\bm{x}}$. For a general matrix ${\bm{X}}$, $\operatorname{rank}({\bm{X}})$, ${\bm{X}}^H$, ${\bm{X}}^T$, and $[{\bm{X}}]_{p,q}$ denote its rank, conjugate transpose, transpose, and the element in the $p$th row and $q$th column, respectively. For a square matrix ${\bm{Y}}$, ${\rm{tr}}({\bm{Y}})$ and ${\bm{Y}}^{-1}$ denotes its trace and inverse, respectively, while ${\bm{Y}} \succeq 0$ represents that ${\bm{Y}}$ is a positive semidefinite matrix. $\jmath$ denotes the imaginary unit, i.e., ${\jmath}^2 = -1$. The distribution of a circularly symmetric complex Gaussian (CSCG) random variable with mean $x$ and variance $\sigma^2$ is denoted by $\mathcal{C} \mathcal{N}(x,\sigma^2)$.

\section{System Model and Problem Formulation}
\label{SYSTEM}
We consider a UAV-enabled ISAC system aimed at sensing several prospective ground targets while providing downlink communication service for $K$ single-antenna users within a given flight period $T$ s. The set of the users and that of the prospective targets are denoted by ${\cal{K}} = \{1,\cdots,K\}$ and ${\cal{J}} = \{1,\cdots,J\}$, respectively. The horizontal location of user $k$ is denoted by ${\bm{u}}_k = [u_{x,k}, u_{y,k}]^T$, which can be either obtained by global positioning system (GPS) or estimated by uplink signals \cite{Garcia2017Direct}. The horizontal locations of the potential targets are denoted by ${\bm{v}}_j = [v_{x,j}, v_{y,j}]^T$, $j \in {\cal{J}}$. The value of ${\bm{v}}_j$ is determined based on the specific sensing tasks. For example, ${\bm{v}}_j$ can be set as the estimated location based on the previous frames for target tracking, or set as a uniformly sampled positions in the region of interest for target detection. The whole mission period $T$ can be discretized into $N$ time slots with duration $\delta_t = \frac{T}{N}$, and the index of time slot is denoted by $n \in {\cal{N}} =  \{1,\cdots,N\}$. Here, the time slot is chosen to be sufficiently small, during which the UAV's location is assumed to be approximately unchanged to facilitate the trajectory and beamforming design for ISAC. The UAV's horizontal location is denoted by ${\bm{q}}[n] = [ q_x[n], q_y[n]]^T$, where $n \in {\cal{N}}$, and the UAV is assumed to fly at a constant altitude of $H$ m subject to air traffic control \cite{Wu2018Joint}. The general uniform plane array (UPA) is adopted at the UAV, where the number of antennas is denoted by $M = M_x \times M_y$ with $M_x$ and $M_y$ denoting the number of elements along the $x$- and $y$-axis, respectively. The adjacent elements are separated by $d_x = d_y = \frac{\lambda}{2}$, where $\lambda$ denotes the carrier wavelength. Specifically, the UPA is parallel to the ground to facilitate the technical derivation, as shown in Fig.~\ref{figure1}. 

\begin{figure}[t]
	\centering
	\setlength{\abovecaptionskip}{0.cm}
	\includegraphics[width=8.2cm]{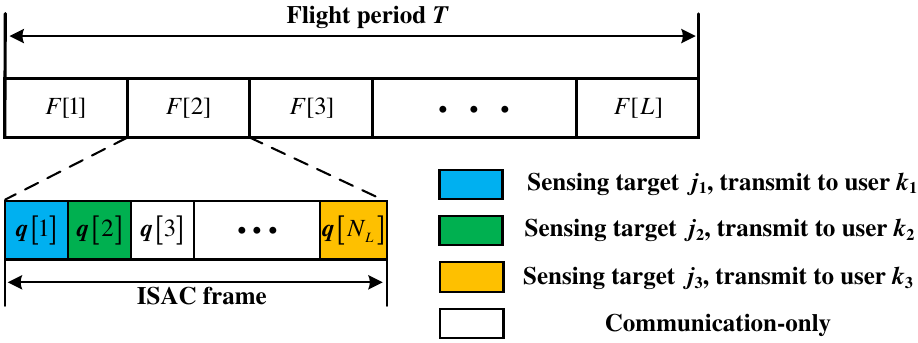}
	\caption{IPSAC mechanism for multi-users and multi-targets scenarios. }
	\label{figure2}
\end{figure}
\subsection{ISAC Frame}
Based on the practical timeliness requirements of sensing tasks, we propose an IPSAC mechanism for multi-user and multi-target scenarios to find a fundamental trade-off between sensing and communication. Specifically, it is assumed that each sensing task should be performed at least once in each ISAC frame, as shown in Fig.~{\ref{figure2}}. Assuming that the total frame number $L = \frac{T}{T_L}$ is an integer for ease of analysis, where $T_L$ is the frame length.\footnote{The length of the ISAC mission frame $T_L$ is set according to the requirement of the task execution frequency.} Then, the number of time slots in each ISAC frame is $N_L  = \frac{N}{L}$ and the index of ISAC frame is denoted by $l \in {\cal{L}}= \{1,\cdots,L\}$. In our proposed IPSAC mechanism, time division multiple access (TDMA) is adopted to avoid signal interference between different information beams due to strong LoS channel, while each target could be sensed in any time slot of each ISAC frame. If the UAV aims to sense target $j$ at time slot $n$, we denote $c_j[n] = 1$. Otherwise, $c_j[n] = 0$. Also, at most one target can be sensed in each time slot. By performing sensing tasks separately in different time slots, the computational complexity of the target estimation algorithm can be reduced. Based on the above discussion, the following conditions hold:
\begin{equation}\label{SensingFrequencyConstraintsA}
	\sum\nolimits_{n = (l - 1){N_L} + 1}^{l  {N_L}} {{c_j}[n]} = 1, \forall l, j,
\end{equation}
\begin{equation}\label{SensingFrequencyConstraintsB}
	\sum\nolimits_{j = 1}^{J} {{c_j}[n]}  \le 1, \forall n.
\end{equation}
Then, the sensing frequency of each target is defined as $1 / T_L = 1 / (\delta N_L)$.
\subsection{Communication and Sensing Model}
\par
The communication links between the UAV and the user are assumed to be dominated by the LoS component \cite{Wu2018Capacity}. Hence, the aerial-ground channel follows the free-space path loss model and the channel power gain from the UAV to user $k$ can be expressed as
\begin{equation}
	\beta_k({\bm{q}}[n], {\bm{u}}_k) \!=\! \beta_{0} d({\bm{q}[n], {\bm{u}}_k})^{-2} \!=\! \frac{\beta_{0}}{H^{2}\!+\! \|{\bm{q}}[n] - {\bm{u}}_k\|^{2}},
\end{equation}
where $\beta_0$ represents the channel power at the reference distance 1 m. Besides, the Doppler effect induced by the UAV mobility is assumed to be well compensated at the communication users \cite{Wu2019Fundamental, kang2010fine} and the sensing receiver \cite{Xing2009Motion, pieraccini2019doppler}, respectively. The transmit array response vector of the UAV towards user $k$'s location ${{\bm{u}}_k}$ is
\begin{equation}\label{SteeringVector}
\begin{aligned}
	{\bm{a}}^H(\bm{q}[n], {\bm{u}}_k) 
	=& \left[1, \cdots, e^{ -\frac{\jmath 2 \pi(M_{x}-1) {d}_{x} {\Phi}(\bm{q}[n], {\bm{u}}_k)}{\lambda}}\right] \\ 
	&\otimes  \left[1, \cdots, e^{ -\frac{ \jmath 2 \pi(M_{y}-1) {d}_{y} {\Omega}(\bm{q}[n], {\bm{u}}_k)}{\lambda}}\right].
\end{aligned}
\end{equation}
In (\ref{SteeringVector}), $ {\Phi}(\bm{q}[n], {\bm{u}}_k) = \sin(\phi({\bm{q}}[n],{\bm{u}}_k)) \cos(\theta({\bm{q}}[n],{\bm{u}}_k)) = \frac{q_x[n] - u_{x,k}}{\|\bar{\bm{q}}[n] - \bar{\bm{u}}_k\|}$, where $\bar{\bm{q}}[n] = [q_x, q_y, H]^T$, and $\bar{\bm{u}}_k = [u_{x,k},u_{y,k},0]^T$. And ${\Omega}(\bm{q}[n], {\bm{u}}_k) =  \sin(\phi({\bm{q}}[n],{\bm{u}}_k)) \sin(\theta({\bm{q}}[n],{\bm{u}}_k)) = \frac{q_y[n] - u_{y,k}}{\|\bar{\bm{q}}[n] - \bar{\bm{u}}_k\|}$. $\phi({\bm{q}}[n],{\bm{u}}_k)$ represents the zenith angle of departure (AoD) of the signal from the UAV to user $k$'s location ${\bm{u}}_k$, and $\theta({\bm{q}}[n],{\bm{u}}_k)$ represents its corresponding azimuth AoD. Therefore, the baseband equivalent channel from the UAV to user $k$ can be expressed as 
\begin{equation}
	{\bm{h}}^H_k\left(\bm{q}[n], {\bm{u}}_k\right)=\sqrt{\beta_{k}(\bm{q}[n],{\bm{u}}_k)} e^{-\jmath \frac{2 \pi d{(\bm{q}[n],{\bm{u}}_k)}}{\lambda}} {\bm{a}}^H\left(\bm{q}[n], {\bm{u}}_k\right) .
\end{equation}
\par
Without loss of generality, we assume that the UAV can transmit the information-bearing signal ${\bm{s}}_k$ to user $k$, where ${s}_k, \sim \mathcal{C} \mathcal{N}(0,1)$. Moreover, the communication signals are uncorrelated with each other, i.e., ${\bm{E}}\left(s_k[n]s_{k'}[n]\right) = 0$, where $k \ne k'$, and $k, k' \in {\cal{K}}$ \cite{LiuX2020Joint}. The linear transmit precoding is applied at the UAV for the assigned user and target. Hence, the complex baseband transmitted signal at the UAV can be expressed as a weighted sum of communication signals, i.e., 
\begin{equation}
	{\bm{x}}[n] = {{\bm{w}}_c}[n] \sum\nolimits_{k=1}^{K}{\alpha_k[n] {s_k}[n]}, n \in {\cal{N}},
\end{equation}
where ${\bm{w}}_c[n] \in \mathbb{C}^{M \times 1}$ is the corresponding information beamforming vector, and $\alpha_k[n] = 1$ if the UAV transmits signal $s_k$ to user $k$ at the $n$th time slot, otherwise, $\alpha_k[n] = 0$. Since the UAV only serves at most one user at each time slot, we have the following constraint
\begin{equation}\label{UserAssociation}
	\sum\nolimits_{k=1}^{K} \alpha_k[n] \le 1, \forall, n.
\end{equation}
Then, at the $n$th time slot, the received signal at user $k$ is
\begin{equation}\label{UserReceivedSignal}
	{y}_k[n] =  {\bm{h}}^H_{c,k}\left(\bm{q}[n], {\bm{u}}_k\right) ({{\bm{w}}_c}[n] \sum\nolimits_{k=1}^{K}{\alpha_k[n]  {s_k}[n]} ) + {n_k}[n],
\end{equation}
where $n_k[n] \sim {\cal{CN}}(0, \sigma_k^2)$ denotes the additive white Gaussian noise (AWGN) at user $k$'s receiver. Accordingly, for $ \alpha_k[n] = 1$, the signal-to-noise ratio (SNR) of user $k$ is given by 
\begin{equation}
	{\gamma}_k[n] = \frac{{{{\left| {{\bm{h}}^H_{c,k}\left(\bm{q}[n], {\bm{u}}_k\right) {{\bm{w}}_c}[n]} \right|}^2}}}{  \sigma_k^2} ,\forall n \in {\cal{N}}.
\end{equation}
As a result, when $\alpha_k[n] = 1$, the corresponding achievable rate of user $k$ at time slot $n$ in bits-per-second-per-Hertz ({\rm{bps/Hz}}) is
\begin{equation}
	{R}_k[n] = {\log_2 (1 + {\gamma_k }[n])}.
\end{equation}
As the communication signals reflected by the target can also be utilized for target parameter estimation in our considered system \cite{Cui2021Integrating, Wang2019Dual}, the communication signals $\{s_k[n]\}_{k=1}^K$ are further exploited for sensing. As a result, the transmit beam pattern gain from the UAV to the direction of target $j$ can be given by
\begin{equation}\label{BeampatternGain}
	\begin{aligned}
		& \Gamma \left(\bm{q}[n], {\bm{v}}_j\right) = E\left[ {{{\left| {\bm{a}}^H(\bm{q}[n], {\bm{v}}_j) \left( {\bm{x}}[n]\right) \right|}^2}} \right] \\
		= & {\bm{a}}^H(\bm{q}[n], {\bm{v}}_j)  \underbrace{   \left( {\bm{w}}_c[n]{\bm{w}}^H_c[n]  \right) }_{\text{covariance matrix}}   {\bm{a}}(\bm{q}[n], {\bm{v}}_j).
	\end{aligned}
\end{equation}
Based on the definition in (\ref{BeampatternGain}), the power of reflected signals from target can be expressed a function of $\Gamma \left(\bm{q}[n], {\bm{v}}_j\right)$ together with pathloss from the UAV to the given target, as shown in constraints (\ref{P1}a).\footnote{In the proposed sensing scheme, the interference from multiple targets is practically weak and thus is approximately ignored, since the targets are sensed in a time-division multiplexing (TDM) manner along with communication and the beam power is mainly concentrated in the direction of the intended target. Moreover, the negative effects of clutter can be mitigated through the prior knowledge of clutter and precoding techniques \cite{Aubry2013Knowledge}.}
\subsection{Problem Formulation}
In this paper, we aim to maximize the achievable rate by optimizing the beamforming vector, user association, sensing time selection, and UAV trajectory, subject to the requirements of the sensing frequency, sensing power, and quality of service (QoS). Accordingly, the optimization problem is formulated as \footnote{The constraints in ({\ref{P1}a}) can be extended into the case with the pathloss exponent of 4, i.e., $ {{c_j}[n]}  \frac{\Gamma \left(\bm{q}[n], \bm{v}_j\right)}{d({\bm{q}}[n], {\bm{v}}_j)^4}  \ge {{c_j}[n]}  {\Gamma ^{th}_j}$, representing that the SNR of the reflected signal from targets should be larger than a given threshold. Such scenario with 4-exponent pathloss can be deemed as mono-static sensing system, while the constraint (11a) with 2-exponent pathloss represents the signal power at the location of targets should be larger than the threshold, which can be regarded as bi-static sensing scenarios, i.e. there exists another dedicated receiver for echoes analysis. Also, the comparison for these two cases are presented in simulation results, given in Section \ref{DifferentPathloss}.}
\begin{alignat}{2}
	\label{P1}
	(\rm{P1}): \quad & \begin{array}{*{20}{c}}
		\mathop {\max }\limits_{{{\bm{w}}_c}, {\bm{A}}, {{\bm{Q}}}, {\bm{C}}} \quad \mathop  \frac{1}{N}  \sum\nolimits_{n = 1}^N \sum\nolimits_{k = 1}^K \alpha_k[n] R_k[n]
	\end{array} & \\ 
	\mbox{s.t.}\quad
	& (\ref{SensingFrequencyConstraintsA}), (\ref{SensingFrequencyConstraintsB}), (\ref{UserAssociation}), \nonumber \\
	& {{c_j}[n]}  \frac{\Gamma \left(\bm{q}[n], \bm{v}_j\right)}{d({\bm{q}}[n], {\bm{v}}_j)^2}  \ge {{c_j}[n]}    {\Gamma ^{th}_j},\forall j,   n, & \tag{\ref{P1}a}\\
	& c_j[n] \in \{0,1\},  \alpha_k[n] \in \{0,1\}, \forall j,  k,  n,   & \tag{\ref{P1}b} \\
	& \frac{1}{N_L}\sum\nolimits_{n = (l - 1){N_L} + 1}^{l  {N_L}} \alpha_k[n] {R}_k[n] \ge {R}_k^{th},\forall  k, l, & \tag{\ref{P1}c} \\
	&  \left\| {\bm{w}}_c[n]\right\|^2  \le P_{\max}, \forall n,  & \tag{\ref{P1}d} \\ 
	& \| {\bm{q}}[n] -{\bm{q}}[n-1] \|  \le V_{\max} \delta_t, \forall n \in {\cal{N}} \backslash \{1\},  & \tag{\ref{P1}e} \\
	& {\bm{q}}[1] = {\bm{q}}_I,  {\bm{q}}[N] = {\bm{q}}_F.   & \tag{\ref{P1}f}
\end{alignat} 
In (P1), ${\bm{C}} = \{{\bm{c}}[n]\}_{n=1}^N$ and ${\bm{A}} = \{{\bm{\alpha}}[n]\}_{n=1}^N$, where ${\bm{c}}[n] = \{c_j[n]\}_{j=1}^J$ is the target selection at the $n$th time slot and ${\bm{\alpha}[n]} = \{{{\alpha}_{k}}[n]\}_{k=1}^K$ is the user association at the $n$th time slot. Similarly, ${{\bm{w}}_c} = \{{\bm{w}}_c[n]\}_{n=1}^N$, and ${{\bm{Q}}} = \{{{\bm{q}}}[n]\}_{n=1}^N$. Under the given sensing frequency, the beam pattern gain constraints at the direction of targets are given by (\ref{P1}a), where ${\Gamma ^{th}_j}$ denotes the beam pattern gain threshold of target $j$ and $d({\bm{q}}[n], {\bm{v}}_j)^2$ represents the corresponding pathloss. The minimum achievable rate requirements in each ISAC frame are given by (\ref{P1}c) to satisfy the quality of service. The total transmit power and the maximum distance between two consecutive locations are constrained as in (\ref{P1}d) and (\ref{P1}e), respectively. The initial and final locations constraints are given by (\ref{P1}f). Besides, if a certain target needs both communication and sensing services (e.g., the sensing results can be utilized for communication enhancement, i.e., sensing gain achieved for communication), another user with the same location could be introduced for this case.
\par
Solving problem (P1) is highly non-trivial, since it is non-convex and involves integer variables which are closely coupled with UAV trajectory and beamforming. To address this problem, we first derive the closed-form optimal beamforming vector and a tight lower bound of the achievable rate. Accordingly, an efficient penalty-based algorithm consisting of two layers is proposed to solve the considered problem. Furthermore, by ignoring initial and final location constraints, we prove the structural symmetry between the optimal solutions in different ISAC frames. Based on this result, a low-complexity algorithm is proposed to reduce the computation complexity caused by trajectory discretization, especially for the practical scenarios with long flight periods.

\section{Penalty-based Algorithm to (P1)}
\label{AnalysisOfOptimalSolution}
In this section, we first investigate the closed-form optimal beamforming vector for the proposed IPSAC mechanism in Section \ref{ClosedFormBeamforming}. Then, a tight lower bound of the original objective value is provided in Section \ref{LowerBoundRate}, based on which, we propose a penalty-based algorithm to jointly optimize the UAV trajectory, user association, and sensing time selection in Section \ref{PenaltyAlgorithm} and Section \ref{InnerLayerAlgorithm}.
\subsection{Closed-form Optimal Beamforming}
\label{ClosedFormBeamforming}
\par
It can be found that, if $ \sum\nolimits_{j=1}^{J} c_j[n] = 0$ and $\alpha_{k}[n] = 1$, for any given UAV location, the optimal beamforming vector ${\bm{w}}_c^* = {\sqrt{P_{\max}}\frac{{\bm{h}}_{c,k}(\bm{q}[n],{{\bm{u}}}_k)}{\|{\bm{h}}_{c,k}(\bm{q}[n],{{\bm{u}}}_k)\|}}$. Otherwise, if $ c_j[n] = 1$ and $\alpha_{k}[n] = 1$, the optimal beamforming vector is highly coupled with the UAV trajectory. For notation convenience, denote ${\bm{h}}^H_{c,k}(\bm{q}[n],{{\bm{u}}}_k)$ and $\frac{{\bm{a}}^H(\bm{q}[n],{\bm{v}}_j)}{d({\bm{q}}[n], {\bm{v}}_j)}$ as ${\bm{h}}^H_{c,k}$ and ${\bm{h}}^H_{r,j}$, respectively. Since maximizing $R_k[n]$ is equivalent to maximizing the corresponding received signal strength of ${{\bm{w}}_c^H} {\bm{h}}_{c,k} {\bm{h}}_{c,k}^H{\bm{w}}_c$, the $\log$ function is dropped in the objective function for simplicity. The received signal strength maximization problem is reduced to
\begin{alignat}{2}
	\label{P1.1}
	\mathop {\max }\limits_{ {\bm{w}}_c}	& \begin{array}{*{20}{c}}
		\mathop {{\bm{w}}_c^H} {\bm{h}}_{c,k} {\bm{h}}_{c,k}^H{\bm{w}}_c
	\end{array} & \\ 
	\mbox{s.t.}\quad
	&  {{\bm{w}}_c^H} {\bm{h}}_{r,j} {\bm{h}}_{r,j}^H{\bm{w}}_c \ge \Gamma^{th}, & \tag{\ref{P1.1}a}\\
	& \|{{\bm{w}}_c}\|^2 \le P_{\max}. & \tag{\ref{P1.1}b}
\end{alignat} 
Although problem (\ref{P1.1}) is a non-convex optimization problem, we show that it is able to derive the optimal beamforming vector in a closed-form expression and this also facilitates the subsequent UAV trajectory optimization.

\begin{table}[t]
	\caption{Important notations and symbols used in this work.} 
	\label{Notation}
	\footnotesize
	\centering
	\begin{tabular}{ll}%
		\hline
		$\rm{\textbf{Notation}}$  & $\rm{\textbf{ Physical meaning}}$ \\
		\hline
		${\bm{u}}_k$, ${\bm{v}}_j$ &  Location of user $k$ and target $j$   \\
		\hline
		${\bm{q}}[n]$             &  UAV's location at the $n$th time slot   \\
		\hline  
		$H$                       &  Altitude of the UAV\\
		\hline  
		$V_{\max}$                &  Maximum speed of the UAV \\
		\hline  
		$T_L$                     &  Time length of each ISAC frame  \\
		\hline  
		$N_L$                     &  Time slot number of each ISAC frame  \\
		\hline  
		$\delta_t$                &  Time interval of discrete locations \\
		\hline  
		$\Gamma^{th}_j$           &  Threshold of beam pattern gain for target $j$  \\
		\hline  
		$\alpha_{k}[n]$           &  Variable indicating whether user $k$ is served at time slot $n$\\
		\hline  
		$c_j[n]$                  &  Variable indicating whether target $j$ is sensed at time slot $n$ \\
		\hline  
		${\bm{w}}_c$  &  Beamforming vectors of communication signal  \\
		\hline  
		${\bm{x}}[n]$             &  Complex baseband transmitted signal  \\
		\hline  
		$R^{th}_k$                &  Minimum constraint of the achievable rate of user $k$  \\
		\hline  
		${\bm{A}}$                &  User association matrix \\
		\hline 
		${\bm{C}}$                &  Sensing time selection matrix \\
		\hline
		${\bm{Q}}$                &  UAV's trajectory vector \\
		\hline 
	\end{tabular}
\end{table}

\begin{Pro}\label{OptimalBearfoming}
	When $ c_j[n] = 1$ and $\alpha_{k}[n] = 1$, for any given UAV location ${\bm{q}}[n]$, the optimal beamforming vector can be expressed as 
	\begin{equation}\label{OptimalClosedBeamforming}
		{\bm{w}}_c^* \!=\! \left\{ {\begin{array}{*{20}{c}}
				{\sqrt{P_{\max}}\frac{{\bm{h}}_{c,k}}{\|{\bm{h}}_{c,k}\|},}&{ \tilde{\Gamma } \ge  {\Gamma ^{th}}}\\
				{\frac{1}{{{\lambda _1}}}( {\sqrt {{\beta_{c,k}}}{{\bm{h}}_{c,k}}  \!+\! {\lambda _2} \sqrt {{\Gamma^{th}}} {{\bm{h}}_{r,j}}  {e^{ - \jmath {\varphi _{k,j}}}}} ),}&{{\rm{Otherwise}}}
		\end{array}} \right.,
	\end{equation}
where ${ \varphi _{k,j}} = \arccos \frac{|{\bm{h}}_{c,k}^H {\bm{h}}_{r,j}|}{\| {\bm{h}}_{c,k}^H \| \| {\bm{h}}_{r,j} \|}$, ${\lambda _1} = \frac{\Upsilon{{{\| {{\bm{h}}_{c,k}^H} \|}^2}\sin { \varphi _{k,j}} }}{{\sqrt {{P_{\max }}{{\| {{{\bm{h}}_{r,j}}} \|}^2} - {\Gamma ^{th}}} }}$,  ${\lambda _2} = \frac{{  \Upsilon{{{\| {{\bm{h}}_{c,k}^H} \|}^2}\sqrt {{\Gamma ^{th}}}  - \Upsilon ^2 \| {{\bm{h}}_{c,k}^H} \|\| {{{\bm{h}}_{r,j}}} \|\cos { \varphi _{k,j}} }}}{{{{\| {{{\bm{h}}_{r,j}}} \|}^2}\sqrt {{P_{\max }}{{\| {{{\bm{h}}_{r,j}}} \|}^2}{\Gamma ^{th}} - {{( {{\Gamma ^{th}}} )}^2}}\sin { \varphi _{k,j}} }}$, ${\beta_{c,k}}{{ = }}\frac{{\| {{\bm{h}}_{c,k}^H} \|^2}}{{\| {{{\bm{h}}_{r,j}}} \|^2}}{\Upsilon ^2}$, $\Upsilon =  {\sqrt {{\Gamma^{th}}}}\cos  {\varphi _{k,j}} +  {\sqrt {{P_{\max }}{{\| {{{\bm{h}}_{r,j}}} \|}^2} - {\Gamma^{th}}}}\sin  {\varphi _{k,j}}$, and $\tilde{\Gamma } = \frac{{{M{P_{\max }}\cos ^2 { \varphi _{k,j}}}}}{d({\bm{q}}[n], {\bm{v}}_j)^2}$.
\end{Pro}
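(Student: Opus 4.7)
My plan is to split the argument into two regimes determined by whether the simple maximum-ratio transmission (MRT) beamformer $\bm{w}_c^{\rm MRT} = \sqrt{P_{\max}}\,\bm{h}_{c,k}/\|\bm{h}_{c,k}\|$ is already feasible for~(\ref{P1.1}a). First I would compute the beam pattern gain under MRT as
\begin{equation*}
(\bm{w}_c^{\rm MRT})^H \bm{h}_{r,j}\bm{h}_{r,j}^H \bm{w}_c^{\rm MRT} = P_{\max}\|\bm{h}_{r,j}\|^2\cos^2\varphi_{k,j} = \frac{MP_{\max}\cos^2\varphi_{k,j}}{d(\bm{q}[n],\bm{v}_j)^2},
\end{equation*}
using $\|\bm{h}_{r,j}\|^2 = M/d(\bm{q}[n],\bm{v}_j)^2$ together with the definition of $\varphi_{k,j}$. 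Whenever this expression equals or exceeds $\Gamma^{th}$, MRT attains the global upper bound $P_{\max}\|\bm{h}_{c,k}\|^2$ on the objective imposed by~(\ref{P1.1}b) alone and is therefore optimal, which recovers the first branch of~(\ref{OptimalClosedBeamforming}).

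For the complementary (non-trivial) regime, I would first reduce the search to the two-dimensional subspace $\mathcal{S} = \operatorname{span}\{\bm{h}_{c,k},\bm{h}_{r,j}\}$: any component of $\bm{w}_c$ orthogonal to $\mathcal{S}$ consumes power without contributing to either the objective or the sensing term, so it can be projected out with no loss. I would then argue that both constraints must bind at the optimum. The power constraint~(\ref{P1.1}b) is active because the objective is strictly increasing in $\|\bm{w}_c\|$ along any ray through the origin with non-zero projection onto $\bm{h}_{c,k}$; the sensing constraint~(\ref{P1.1}a) is active because, in this regime, MRT (which globally maximizes the objective on the sphere $\|\bm{w}_c\| = \sqrt{P_{\max}}$) violates it, so continuity forces the optimizer onto the sensing boundary.

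With $\bm{w}_c \in \mathcal{S}$ and both constraints tight, I would parameterize $\bm{w}_c = \alpha\,\bm{h}_{c,k}/\|\bm{h}_{c,k}\| + \beta\, e^{-\jmath\varphi_{k,j}}\,\bm{h}_{r,j}/\|\bm{h}_{r,j}\|$ with non-negative real scalars $\alpha,\beta$, after absorbing a global phase and the phase of $\bm{h}_{c,k}^H\bm{h}_{r,j}$ so that the cross term reduces to $\alpha\beta\cos\varphi_{k,j}$. The two active equations then become a $2\times 2$ quadratic system in $(\alpha,\beta)$ whose solution is elementary. Substituting back and collecting terms should yield the claimed expression~(\ref{OptimalClosedBeamforming}), with the auxiliary scalar $\Upsilon$ arising from the positive square-root branch chosen so as to maximize the objective.

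The main obstacle is the final bookkeeping step: matching the explicit $(\alpha,\beta)$ to the specific pre-factors $\sqrt{\beta_{c,k}}$ and $\lambda_2\sqrt{\Gamma^{th}}$ and the overall scaling $1/\lambda_1$ in~(\ref{OptimalClosedBeamforming}). Among the two algebraic branches produced by the quadratic system, only the one corresponding to constructive alignment with $\bm{h}_{c,k}$ maximizes the objective; this spurious-root pruning is best justified by direct comparison of objective values, or equivalently by cross-checking the KKT stationarity condition $(\bm{h}_{c,k}\bm{h}_{c,k}^H + \mu_1\bm{h}_{r,j}\bm{h}_{r,j}^H)\bm{w}_c^\star = \mu_2\,\bm{w}_c^\star$ with $\mu_1,\mu_2 \geq 0$. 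Because~(\ref{P1.1}) is non-convex (the sensing constraint is a reverse-convex quadratic), as an independent check of global optimality I would lift to the semidefinite relaxation $\bm{W} = \bm{w}_c\bm{w}_c^H$ and verify that the extracted rank-one $\bm{W}^\star = \bm{w}_c^\star(\bm{w}_c^\star)^H$ remains optimal for the lifted program, confirming that no feasible beamformer outside $\mathcal{S}$ can improve upon it.
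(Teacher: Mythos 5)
Your proposal is correct, but it reaches the result by a genuinely different route from the paper. The paper's Appendix A is a Lagrangian/KKT argument: it writes the stationarity condition $-\bm{h}_{c,k}\bm{h}_{c,k}^H\bm{w}_c+\lambda_1\bm{w}_c-\lambda_2\bm{h}_{r,j}\bm{h}_{r,j}^H\bm{w}_c=0$, left-multiplies by $(\bm{H}^H\bm{H})^{-1}\bm{H}^H$ with $\bm{H}=[\bm{h}_{c,k},\bm{h}_{r,j}]$ to obtain a $2\times2$ system in the channel projections, splits on $\lambda_2=0$ versus $\lambda_2\neq0$ (the former recovering MRT, the latter activating the sensing constraint), and then resolves a phase ambiguity into two roots $\beta^{\pm}_{c,k}$, keeping the larger; the multipliers $\lambda_1^*,\lambda_2^*$ in the stated formula fall out of this back-substitution. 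You instead work primally: reduce to $\operatorname{span}\{\bm{h}_{c,k},\bm{h}_{r,j}\}$, argue both constraints are tight, and solve the two resulting quadratics in $(\alpha,\beta)$. Your route buys something the paper's does not make explicit: since (12) is non-convex, KKT stationarity is only necessary, whereas your reduction exhausts the feasible set and yields global optimality directly. Indeed, subtracting the active sensing equation (divided by $\|\bm{h}_{r,j}\|^2$) from the active power equation pins down $\alpha^2=(P_{\max}\|\bm{h}_{r,j}\|^2-\Gamma^{th})/(\|\bm{h}_{r,j}\|^2\sin^2\varphi_{k,j})$ uniquely, and the objective becomes $\|\bm{h}_{c,k}\|^2\bigl(P_{\max}-\beta^2\sin^2\varphi_{k,j}\bigr)$, so the branch selection you defer to "direct comparison" reduces to minimizing $\beta$, which is achieved by the constructive phase alignment — this settles the spurious root cleanly and makes the SDR verification you propose superfluous. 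Two small points: your activeness argument for the sensing constraint ("continuity forces the optimizer onto the boundary") should be tightened into a perturbation argument (rotate the candidate slightly toward MRT on the power sphere to strictly increase the objective while preserving feasibility), and the remaining "bookkeeping" of matching $(\alpha,\beta)$ to $\sqrt{\beta_{c,k}}/\lambda_1$ and $\lambda_2\sqrt{\Gamma^{th}}/\lambda_1$ is genuinely required to certify the stated closed form, but it is routine algebra with no hidden obstruction.
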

\begin{proof}
	Please refer to Appendix A.
\end{proof}

\par
In Proposition \ref{OptimalBearfoming}, the optimal beamforming vector could be intuitively viewed as two linearly superimposed beams towards user and target, respectively, which directly shows the influencing factors of the associated user's achievable rate. Also, the closed-form beamforming in (\ref{OptimalClosedBeamforming}) can also hold for arbitrary user channels ${{\bm{h}}_{c,k}^H}$. For $ \frac{{{M{P_{\max }}\cos ^2 { \varphi _{k,j}}}}}{d({\bm{q}}[n], {\bm{v}}_j)^2} <  {\Gamma ^{th}}$, the optimal SNR at user $k$ can be obtained by plugging ${\bm{h}}_{c,k}$ and ${\bm{h}}_{r,j}$ into $\beta_{c,k}$, yielding 
\begin{equation}\label{OptimalGamma}
	\begin{aligned}
		\gamma^*_{k,j} &= \gamma_0 \frac{{{d({\bm{q}}[n],{\bm{v}}_j)^2}}}{d({\bm{q}}[n],{\bm{u}}_k)^2} \\
		&{\left( {{  \sqrt {{\Gamma ^{th}_j}}}\cos { \varphi _{k,j}} \!+\! { \sqrt {\frac{{M{P_{\max }}}}{{d({\bm{q}}[n],{\bm{v}_j})^2}} \!-\! {\Gamma ^{th}}}\sin { \varphi _{k,j}}}   } \right)^2},
	\end{aligned}
\end{equation}
where $\gamma_0 = \frac{{\beta _0}}{\sigma ^2}$. 
\begin{remark}
	In (\ref{OptimalGamma}), the optimal user SNR is mainly determined by two parts: $\sqrt {{\Gamma ^{th}_j}}$ and $\sqrt {\frac{{M{P_{\max }}}}{{d({\bm{q}}[n],{\bm{v}_j})^2}}- {\Gamma ^{th}}}$, together with the channel correlation coefficient, i.e., $\cos { \varphi _{k,j}}$. When $\cos { \varphi _{k,j}} = 1$, the communication channel and target channel are linearly related. In this case, the channel power gain at user $k$ is $\frac{P_{\max}M \beta_0}{d({\bm{q}}[n],{\bm{u}}_k)^2}$, which holds if and only if the locations of user and target coincide. Whereas when $\cos { \varphi _{k,j}} = 0$, the communication channel and target channel are orthogonal to each other. In this case, the channel power gain at user $k$ is reduced to $\beta_0 \frac{{ {M{P_{\max }} - {\Gamma ^{th}} {d({\bm{q}}[n],{\bm{v}}_j)^2}} }}{d({\bm{q}}[n],{\bm{u}}_k)^2}$. 
\end{remark}

\begin{thm}\label{InftyAntenaNumber}
	If $M_x \to \infty$ and $M_y \to \infty$, for any given UAV location ${\bm{q}}[n]$, the optimal user $k$'s SNR during sensing target $j$ is denoted by
	\begin{equation}\label{InftyMBeta}
		\gamma^*_{k,j} = \left\{ {\begin{array}{*{20}{c}}
				{\gamma_0 \frac{{ {M{P_{\max }} - {\Gamma ^{th}}{d({\bm{q}}[n],{\bm{v}}_j)^2}} }}{{d({\bm{q}}[n],{\bm{u}}_k)^2}},}&{{\bm{u}}_k \ne {\bm{v}}_j}\\
				{\gamma_0 \frac{{ {M{P_{\max }} } }}{{d({\bm{q}}[n],{\bm{u}}_k)^2}},}&{{\rm{Otherwise}}}
		\end{array}} \right.,
	\end{equation}
	where $\gamma_0 = \frac{{\beta _0}}{\sigma ^2}$. And, the corresponding optimal UAV location with the maximum achievable rate at user $k$ during sensing target $j$ is given by 
	\begin{equation}
		{\bm{q}}^*_{k,j} = {\bm{u}}_k + \frac{{\sqrt {{{ Z}^2} + 4{H^2}} -Z}}{{2D_{k,j}}}({{\bm{v}}_j} - {{\bm{u}}_k}),
	\end{equation}
	where ${Z} = \frac{{M{P_{\max }} }}{{{\Gamma ^{th}}D_{k,j}}} - {D_{k,j}}$ and $D_{k,j} = \| {{\bm{v}}_j} - {{\bm{u}}_k} \|$ denotes the horizontal distance between user $k$ and target $j$.
\end{thm}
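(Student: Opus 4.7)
The plan is to take the closed-form SNR in Proposition \ref{OptimalBearfoming} as the starting point, extract its $M_x,M_y\to\infty$ limit to obtain (\ref{InftyMBeta}), and then maximize the limiting expression over $\bm{q}[n]$ to recover $\bm{q}^*_{k,j}$. These two tasks are essentially independent and I would carry them out in that order.

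For the SNR limit, the pivotal observation is that the channel correlation coefficient $\cos\varphi_{k,j}=|{\bm{a}}^H(\bm{q},\bm{u}_k){\bm{a}}(\bm{q},\bm{v}_j)|/M$ vanishes as $M=M_xM_y\to\infty$ whenever $\bm{u}_k\neq\bm{v}_j$. Exploiting the Kronecker structure in (\ref{SteeringVector}), the inner product factorizes into two one-dimensional complex-exponential sums in the direction-cosine mismatches $\Phi(\bm{q},\bm{u}_k)-\Phi(\bm{q},\bm{v}_j)$ and $\Omega(\bm{q},\bm{u}_k)-\Omega(\bm{q},\bm{v}_j)$; each sum is a standard Dirichlet kernel and is therefore $O(1)$ for any fixed nonzero mismatch, whereas the denominator $M$ diverges. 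Consequently $\cos\varphi_{k,j}=O(1/M)\to 0$ and $\sin\varphi_{k,j}\to 1$, so the activation condition $MP_{\max}\cos^2\varphi_{k,j}/d(\bm{q},\bm{v}_j)^2\geq\Gamma^{th}$ in (\ref{OptimalClosedBeamforming}) fails for large $M$ and the second branch of Proposition \ref{OptimalBearfoming} applies. Substituting the limits into (\ref{OptimalGamma}) collapses the cross term and produces the first branch of (\ref{InftyMBeta}). In the degenerate case $\bm{u}_k=\bm{v}_j$ the steering vectors coincide, so $\cos\varphi_{k,j}=1$, the activation condition is satisfied for all sufficiently large $M$, and the matched-filter branch delivers $\gamma^{*}=\gamma_0 MP_{\max}/d(\bm{q},\bm{u}_k)^2$, which is the second branch of (\ref{InftyMBeta}).

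For the optimal UAV location I would first argue that $\bm{q}^*_{k,j}$ must lie on the horizontal line through $\bm{u}_k$ and $\bm{v}_j$. Decomposing $\bm{q}$ into its orthogonal projection onto this line plus a perpendicular offset of length $r$, both $d(\bm{q},\bm{u}_k)^2$ and $d(\bm{q},\bm{v}_j)^2$ increase by $r^2$; differentiating the ratio in (\ref{InftyMBeta}) with respect to $r^2$ yields a derivative whose sign is that of $\Gamma^{th}d(\bm{q},\bm{v}_j)^2-\Gamma^{th}d(\bm{q},\bm{u}_k)^2-MP_{\max}$, which is strictly negative under the sensing-feasibility bound $MP_{\max}\geq\Gamma^{th}d(\bm{q},\bm{v}_j)^2$, forcing $r=0$. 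Writing $\bm{q}=\bm{u}_k+t(\bm{v}_j-\bm{u}_k)$ then reduces the problem to maximizing
\begin{equation*}
f(t)=\frac{MP_{\max}-\Gamma^{th}\bigl(H^{2}+(1-t)^{2}D_{k,j}^{2}\bigr)}{H^{2}+t^{2}D_{k,j}^{2}}.
\end{equation*}
Setting $f'(t)=0$, clearing denominators, and exploiting the cancellation of the resulting cubic terms gives the quadratic
\begin{equation*}
\Gamma^{th}D_{k,j}^{2}\,t^{2}+(MP_{\max}-\Gamma^{th}D_{k,j}^{2})\,t-\Gamma^{th}H^{2}=0.
\end{equation*}
With the substitution $Z=MP_{\max}/(\Gamma^{th}D_{k,j})-D_{k,j}$ the discriminant factors as $(\Gamma^{th}D_{k,j})^{2}(Z^{2}+4H^{2})$ and the two roots become $t=(-Z\pm\sqrt{Z^{2}+4H^{2}})/(2D_{k,j})$; a second-derivative check eliminates the negative root, and retaining the positive root yields the claimed expression for $\bm{q}^{*}_{k,j}$.

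The step I expect to be the main obstacle is the asymptotic-orthogonality argument: one must guarantee the Dirichlet-kernel bound is uniform in $\bm{q}$ over the UAV positions of interest, which is not automatic because the direction-cosine mismatch can in principle become small as $\bm{q}$ drifts far from $\bm{u}_k$ and $\bm{v}_j$. Handling this requires a standard continuity and compactness argument together with the fact that for $\bm{u}_k\neq\bm{v}_j$ at any finite altitude the two direction-cosine pairs never fully coincide. Once that bound is secured, the remaining work is routine algebra in the calculus step.
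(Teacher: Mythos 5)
Your proposal is correct and follows essentially the same route as the paper's Appendix B: asymptotic orthogonality of the steering vectors via the Dirichlet-kernel factorization to obtain $\cos\varphi_{k,j}\to 0$ and hence (\ref{InftyMBeta}), followed by restricting $\bm{q}$ to the segment through $\bm{u}_k$ and $\bm{v}_j$ and solving the stationarity quadratic, which is identical to the paper's equation in $x=tD_{k,j}$. The only substantive addition is your perpendicular-offset monotonicity argument for collinearity, which the paper merely asserts as "readily proved," so that is a welcome filling-in rather than a deviation.
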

\begin{proof}
	Please refer to Appendix B.
\end{proof}

According to Lemma \ref{InftyAntenaNumber}, the user $k$'s SNR can be simplified as (\ref{InftyMBeta}) when the number of antennas is large, since the channel ${\bm{h}}^H_{c,k}$ and ${\bm{h}}^H_{r,j}$ can be completely irrelevant. However, solving (P1) is still very challenging due to the closely coupled integer variables and highly non-convex constraints. In the next subsection, we derive a tight lower bound of the achievable rate according to the optimal beamforming vector in Proposition \ref{OptimalBearfoming} to facilitate solving the problem (P1). 

\subsection{Lower Bound of Achievable Rate}
\label{LowerBoundRate}
For any given user association ${\bm{A}}$, sensing time selection $\bm{C}$, and UAV trajectory $\bm{Q}$, the optimal beamforming vector ${\bm{w}}_c$ can be obtained based on Proposition \ref{OptimalBearfoming}. Then, its corresponding achievable rate of user $k$ at the $n$th time slot is given by
\begin{equation}
	\begin{aligned}
		R_{k}[n] = &  \underbrace{\alpha_{k}[n]\left(1-\sum\nolimits_{j = 1}^Jc_j[n] \right) R^{C}_k[n]}_{\text{Only communication}} \\
		& + \underbrace{\alpha_{k}[n]\sum\nolimits_{j = 1}^J c_j[n]  R^{ISAC}_{k,j}[n]}_{\text{During sensing}},
	\end{aligned}
\end{equation}
where the user $k$'s optimal achievable rate during communication-only time is given by 
\begin{equation}\label{gamma_different_location}
	R^{C}_k[n] = \log_2\left( 1+\gamma_0 \frac{{M{P_{\max }}}}{{d({\bm{q}}[n],{\bm{u}}_k)^2}} \right),
\end{equation}
and the user $k$'s optimal achievable rate during sensing time is given by 
\begin{equation}\label{R_ISAC}
	R^{ISAC}_{k,j}[n] \!=\! \left\{ {\begin{array}{*{20}{c}}
			{\log_2\left( 1 \!+\!\gamma_0 \frac{{M{P_{\max }}}}{{d({\bm{q}}[n],{\bm{u}}_k)^2}} \right),}&\Gamma_{k,j}[n] \ge  {\Gamma ^{th}}\\
		{\log_2\left( 1 + \gamma_{k,j}^* \right),}&{\rm{Otherwise}}
\end{array}} \right.,
\end{equation}  
where $\Gamma_{k,j}[n]  = {\frac{{{M{P_{\max }}\cos^2 { \varphi _{k,j}}}}}{d({\bm{q}}[n], {\bm{v}}_j)^2}}$ and $\gamma_{k,j}^*$ is defined in (\ref{OptimalGamma}). Hence, the sum achievable rate can be maximized by only jointly optimizing the user association ${\bm{A}}$, sensing time selection $\bm{C}$, and UAV trajectory $\bm{Q}$. Nonetheless, the considered problem is still challenging due to the piece-wise non-concave function in (\ref{R_ISAC}). To handle this problem, a tight lower bound of $R^{ISAC}_{k,j}[n]$ is derived as below.

\begin{thm}\label{OptimalBeamformingLower}
	The optimal achievable rate of user $k$ during sensing target $j$ satisfies the following condition:
	\begin{equation}\label{GammaLowerBound}
		\begin{aligned}
			R^{ISAC}_k[n] &\ge  \log_2\left( 1+\gamma_0 \frac{{{{{{M{P_{\max }}}}  {- d({\bm{q}}[n],{\bm{v}}_j)^2}{\Gamma ^{th}}}}}}{d({\bm{q}}[n],{\bm{u}}_k)^2}  \right) \\
			&= \underline{R}^{ISAC}_{k,j}[n].
		\end{aligned}
	\end{equation}
\end{thm}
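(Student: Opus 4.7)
The plan is to verify the bound on each branch of the piecewise definition of $R^{ISAC}_{k,j}[n]$ given in (\ref{R_ISAC}). On the first branch, where $MP_{\max}\cos^2\varphi_{k,j} \ge d(\bm{q}[n],\bm{v}_j)^2 \Gamma^{th}$, the rate equals $\log_2(1+\gamma_0 MP_{\max}/d(\bm{q}[n],\bm{u}_k)^2)$, and since $d(\bm{q}[n],\bm{v}_j)^2 \Gamma^{th} \ge 0$ together with the monotonicity of $\log_2(1+\cdot)$ immediately gives the desired bound, this case is dispatched with no further work.

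For the complementary branch the substantive argument is needed. I would substitute $\gamma^*_{k,j}$ from (\ref{OptimalGamma}) and abbreviate $a := \sqrt{\Gamma^{th}}$, $b := \sqrt{MP_{\max}/d(\bm{q}[n],\bm{v}_j)^2 - \Gamma^{th}}$, $c := \cos\varphi_{k,j}$, $s := \sin\varphi_{k,j}$; feasibility of the beam-pattern constraint (i.e., $MP_{\max} \ge d(\bm{q}[n],\bm{v}_j)^2 \Gamma^{th}$) makes $b$ real, and by definition $c^2+s^2 = 1$ with $a,b,c,s \ge 0$. Using the identity $d(\bm{q}[n],\bm{v}_j)^2 b^2 = MP_{\max} - d(\bm{q}[n],\bm{v}_j)^2 \Gamma^{th}$ and the monotonicity of $\log_2$, the lemma then reduces to the purely algebraic inequality $(ac+bs)^2 \ge b^2$, where the branch hypothesis $MP_{\max}\cos^2\varphi_{k,j} < d(\bm{q}[n],\bm{v}_j)^2 \Gamma^{th}$, combined with $a^2+b^2 = MP_{\max}/d(\bm{q}[n],\bm{v}_j)^2$, rewrites cleanly as $as > bc$.

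The main algebraic manipulation I would then carry out is short: starting from $as \ge bc$ and adding $a \ge 0$ to each side gives $a(1+s) \ge a + bc \ge bc$; multiplying by $c \ge 0$ and using the factorization $c^2 = (1-s)(1+s)$ yields $ac(1+s) \ge b(1-s)(1+s)$, hence $ac \ge b(1-s)$, equivalently $ac+bs \ge b$. Squaring and noting that both sides are non-negative gives $(ac+bs)^2 \ge b^2$, as required.

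The only real obstacle is recognizing that the branch condition is algebraically equivalent to the crisp trigonometric relation $as > bc$, and that the factorization $c^2 = (1-s)(1+s)$ is exactly the bridge between this hypothesis and the target form of the bound; once these two observations are made, the remainder of the argument is routine bookkeeping, and the final expression in (\ref{GammaLowerBound}) follows immediately by taking logarithms on both branches.
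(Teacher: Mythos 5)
Your proof is correct, and its skeleton --- splitting on the two branches of (\ref{R_ISAC}), dispatching the first branch by $d({\bm{q}}[n],{\bm{v}}_j)^2\Gamma^{th}\ge 0$ and monotonicity of $\log_2(1+\cdot)$, and reducing the second branch to the scalar inequality $\bigl(\sqrt{\Gamma^{th}}\cos\varphi_{k,j}+\sqrt{G-\Gamma^{th}}\sin\varphi_{k,j}\bigr)^2\ge G-\Gamma^{th}$ with $G=MP_{\max}/d({\bm{q}}[n],{\bm{v}}_j)^2$ --- coincides with the paper's. Where you genuinely diverge is in how that core inequality is established. The paper sets $\mathcal{F}(\Gamma^{th},\rho)=\rho\sqrt{\Gamma^{th}}+\sqrt{1-\rho^2}\sqrt{G-\Gamma^{th}}-\sqrt{G-\Gamma^{th}}$ and runs a two-stage monotonicity argument: $\mathcal{F}$ is increasing in $\Gamma^{th}$, so it suffices to check the lower endpoint of the admissible range of $\Gamma^{th}$, and the resulting one-variable function of $\rho$ is then shown to be increasing from the value $0$ at $\rho=0$. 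You instead translate the branch hypothesis $G\cos^2\varphi_{k,j}<\Gamma^{th}$ into the equivalent form $as>bc$ (with $a=\sqrt{\Gamma^{th}}$, $b=\sqrt{G-\Gamma^{th}}$, $c=\cos\varphi_{k,j}$, $s=\sin\varphi_{k,j}$, all nonnegative) and close with the purely algebraic chain $a(1+s)\ge bc\Rightarrow ac(1+s)\ge b(1-s)(1+s)\Rightarrow ac+bs\ge b$, which is valid since $1+s\ge 1>0$ and handles the $c=0$ case automatically. Your route buys two things: it needs no derivative computations, and it uses the branch condition exactly in the form $\Gamma^{th}>G\rho^2$ in which it arises, whereas the paper's endpoint evaluation is carried out over the interval $(G\rho,\,G]$ rather than $(G\rho^2,\,G]$, so your version actually covers the stated case split more cleanly. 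The paper's formulation, in exchange, makes the monotone dependence on $\Gamma^{th}$ and $\rho$ (and hence the tightness behavior as the channels decorrelate) slightly more visible. Both arguments are sound; yours is the more self-contained of the two.
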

\begin{proof}
	To prove (\ref{GammaLowerBound}), we only need to ensure that $\gamma^*_{k,j} \ge \gamma_0 \frac{{{{{{M{P_{\max }}}}  {- d({\bm{q}}[n],{\bm{v}}_j)^2}{\Gamma ^{th}}}}}}{d({\bm{q}}[n],{\bm{u}}_k)^2}$ holds since the log function is a monotonically increasing function. If $\alpha_{k}[n] = 1$ and $\sum\nolimits_{j = 1}^Jc_j[n] = 0$, or $\frac{{{M{P_{\max }}\cos^2 { \varphi _{k,j}}}}}{d({\bm{q}}[n], {\bm{v}}_j)^2} \ge  {\Gamma ^{th}}$, the maximum ratio transmission (MRT) is the optimal beamforming vector to problem (P1), and thus, the inequality in (\ref{GammaLowerBound}) obviously holds. In the following, we prove that if $\frac{{{M{P_{\max }}\rho^2}}}{d({\bm{q}}[n], {\bm{v}}_j)^2} <  {\Gamma ^{th}}$, ${\rho  {\sqrt {{\Gamma ^{th}}}} {{ + }}\sqrt {\left( {1 - {\rho ^2}} \right)} {\sqrt {\frac{{M{P_{\max }}}}{{d({\bm{q}}[n],{\bm{v}}_j)^2}}- {\Gamma ^{th}}}}   } \ge {\sqrt {\frac{{M{P_{\max }}}}{{d({\bm{q}}[n],{\bm{v}}_j)^2}}- {\Gamma ^{th}}}}$. Let $\rho = \cos { \varphi _{k,j}}$ and $G = \frac{{M{P_{\max }}}}{{d({\bm{q}}[n],{\bm{v}}_j)^2}}$ for notation simplicity. Then, for ${\cal{F}}(\Gamma ^{th},\rho) \triangleq \rho  {\sqrt {{\Gamma ^{th}}}}  + \sqrt {\left( {1 - {\rho ^2}} \right)} {\sqrt {G- {\Gamma ^{th}}}}   - \sqrt{G- \Gamma ^{th}}$, we need to prove ${\cal{F}}(\Gamma ^{th},\rho) \ge 0$ for $\Gamma ^{th} \in ({G}{\rho}, G]$. As ${\cal{F}}(\Gamma ^{th},\rho)$ is an increasing function with respect to (w.r.t) $\Gamma ^{th}$, ${\cal{F}}(\Gamma ^{th},\rho) \ge 0$ if ${\cal{F}}(G \rho,\rho) \ge 0$, where ${\cal{F}}(G \rho,\rho) = {\sqrt {G}}\left(\rho   {{ + }}\sqrt {\left( {1 - {\rho ^2}} \right)} {\sqrt {1-\rho}   } - \sqrt{1-\rho}\right)$. For $\rho \in [0,1]$, ${\cal{F}}(G \rho,\rho)$ is an increasing function w.r.t $\rho$, as $\frac{\partial{\cal{F}}(G \rho,\rho)}{\partial \rho} \!>\! 0$. Hence, ${\cal{F}}(G \rho,\rho) \ge {\cal{F}}(0,0) \!=\! 0$. Then, plugging ${\cal{F}}(G \rho,\rho) \!\ge\! 0$ into (\ref{OptimalGamma}), we obtain $\gamma^* \!\ge\! \gamma_0 \frac{{{{{{M{P_{\max }}}}  {- d({\bm{q}}[n],{\bm{v}}_j)^2}{\Gamma ^{th}}}}}}{d({\bm{q}}[n],{\bm{u}}_k)^2}$, which thus completes the proof.
\end{proof}

The lower bound of user $k$'s SNR in (\ref{GammaLowerBound}) is tight if $M$ goes to infinity according to Lemma \ref{InftyAntenaNumber}. A closer look at this lower bound in (\ref{GammaLowerBound}) reveals that the value of $\cos { \varphi _{k,j}}$ is small since $ {\frac{{\sin M  \Delta \pi /2}}{{ \sin  \Delta \pi /2}}} $ is relatively small for $\Delta \ge \frac{1}{M} $, and $R^{ISAC}_k[n]  = \underline{R}^{ISAC}_{k,j}[n]$ when $\Delta = \frac{2i}{M}$, $i \in {\mathbb{Z}}, i \ne 0$. Based on Lemma \ref{OptimalBeamformingLower}, the lower bound of the user $k$'s achievable rate can be recast as
\begin{equation}\label{LowerBoundAchievable}
	\begin{aligned}
		\underline R_{k}[n] =& \alpha_{k}[n] R^{C}_k[n] \\
		&+  \sum\nolimits_{j = 1}^J \alpha_{k}[n]c_j[n] \left(\underline R^{ISAC}_{k,j}[n] \!-\! R^{C}_k[n]\right).
	\end{aligned}
\end{equation}
Then, we introduce problem (P1.1) as the lower bound of the achievable rate maximization problem in the case by setting $R_{k}[n]$ as $\underline R_{k}[n]$ in (P1). Then, a high-quality solution of problem (P1) can be obtained by solving problem (P1.1), elaborated as follows.

\subsection{Penalty-based Problem Transformation}
\label{PenaltyAlgorithm}
\par 
Although the complicated expression of the optimal achievable rate of user $k$ is simplified as its tight lower bound, the integer variables $\{\alpha_{k}[n]\}$ and $\{c_j[n]\}$ are coupled with each other in the objective function and constraints. To tackle this issue, another variable $e_{k,j}[n] = \alpha_{k}[n]  c_j[n]$ is introduced to decouple the integer variables. Then, $\underline R_{k}[n]$ can be rewritten as 
\begin{equation}
	\underline R_{k}[n] = \alpha_{k}[n] R^{C}_k[n] + 
	\sum\nolimits_{j = 1}^J e_{k,j}[n] \left(\underline R^{ISAC}_{k,j}[n] - R^{C}_k[n]\right),
\end{equation}
where $e_{k,j}[n] \in \{0,1\}$. To ensure the consistency of the problem (P1.1), some other constraints are introduced to replace that in (\ref{SensingFrequencyConstraintsA}) and (\ref{SensingFrequencyConstraintsB}) as follows
\begin{equation}\label{ConstraintUserAssociation}
	\alpha_{k}[n] \ge  e_{k,j}[n], \forall k, j, n,
\end{equation}
\begin{equation}\label{ConstraintTargetSensing}
	\sum\nolimits_{n = (l - 1){N_L} + 1}^{l  {N_L}} \sum\nolimits_{k = 1}^K	e_{k,j}[n] = 1, \forall l, j,
\end{equation}
\begin{equation}\label{ConstraintSesingSlot}
	\sum\nolimits_{k = 1}^K	\sum\nolimits_{j = 1}^J e_{k,j}[n] \le 1, \forall n.
\end{equation}
(\ref{ConstraintUserAssociation}) ensures that $e_{k,j}[n] = 1$ if and only if $\alpha_{k}[n] = 1$. Accordingly, we can readily prove that the new introduced problem with the replaced constraints (\ref{ConstraintUserAssociation})-(\ref{ConstraintSesingSlot}), denoted by (P1.2), is equivalent to (P1.1). Furthermore, the bream pattern gain constraints in (\ref{P1}a) can be transformed into
\begin{equation}\label{NewBeamPatternConstraints}
	\sum\nolimits_{k = 1}^{K}{{e_{k,j}}[n]}  (M{P_{\max }} - d({\bm{q}}[n], {\bm{v}}_j)^2  {\Gamma ^{th}_j}) \ge 0 .
\end{equation}
However, converting $\alpha_k[n]$ and $e_{k,j}[n]$ to continuous-valued variables and then utilizing rounding function to obtain the binary solution, generally may not  satisfy the QoS constraints in (\ref{P1}c) and the beam pattern gain constraints in (\ref{P1}a). Several slack matrices ${\bar {\bm{A}}} = \{ \{\bar \alpha_k[n]\}_{n=1}^N \}_{k=1}^K$ and ${\bar {\bm{E}}} = \{ \{\{\bar e_{k,j}[n]\}_{n=1}^N\}_{k=1}^K \}_{j=1}^J$ are presented to transform the binary constraints into a series of equivalent equality constraints. Specifically, (\ref{P1}b) can be rewritten as
\begin{equation}\label{PenltyA}
	\alpha_k[n](1 - \bar \alpha_k[n]) = 0, \quad \alpha_k[n] = \bar \alpha_k[n], \quad \forall  k,  n, 
\end{equation}
\begin{equation}\label{PenltyB}
	e_{k,j}[n](1 - \bar e_{k,j}[n]) = 0, \quad e_{k,j}[n] = \bar e_{k,j}[n], \quad  \forall k, j, n. 
\end{equation}
We can readily derive that $\alpha_k[n]$ and $e_{k,j}[n]$ satisfying the above two constraints must be either 1 or 0, which confirms the equivalence of the transformation of (\ref{P1}b) into these two constraints. Then, (\ref{PenltyA}) and (\ref{PenltyB}) are added to the objective function in (P1.2) as the penalty terms \cite{bertsekas1997nonlinear}, yielding the following optimization problem 
\begin{alignat}{2}
	\label{P2}
	(\rm{P2}): \quad & \begin{array}{*{20}{c}}
		\mathop {\min }\limits_{\bar{\bm{A}}, \bar{\bm{E}}, {\bm{A}}, {\bm{E}}, {{\bm{Q}}}} \quad - \underline{R}
	\end{array} & \\ 
	\mbox{s.t.}\quad
	& (\ref{UserAssociation}), (\ref{ConstraintUserAssociation})-(\ref{NewBeamPatternConstraints}), (\ref{P1}c)-(\ref{P1}d), \nonumber \\
	& \frac{1}{N_L}\sum\nolimits_{n = (l - 1){N_L} + 1}^{l  {N_L}}  \alpha_{k}[n] \underline R_{k}[n] \ge {R}_k^{th}, \forall k, l,  & \tag{\ref{P2}a} 
\end{alignat} 
where $\underline{R}$ is defined in (\ref{Rdefine})
\begin{figure*}[b]
	\begin{equation}\label{Rdefine}
		\begin{aligned}
			\underline{R}   =& \frac{1}{N} \sum\nolimits_{n = 1}^{N} \sum\nolimits_{k = 1}^K    \alpha_{k}[n] \underline R_{k}[n]  -  \frac{1}{2\eta} \sum\nolimits_{n = 1}^{N}  \sum\nolimits_{k = 1}^K ( | \alpha_k[n](1 - \bar \alpha_k[n]) |^2 + | \alpha_k[n] - \bar \alpha_k[n]|^2  ) \\
			&- \frac{1}{2\eta} \sum\nolimits_{n = 1}^{N} \sum\nolimits_{j = 1}^J \sum\nolimits_{k = 1}^K   ( | e_{k,j}[n](1 - \bar e_{k,j}[n]) |^2 + | e_{k,j}[n] - \bar e_{k,j}[n]|^2 ),
		\end{aligned}
	\end{equation}
\end{figure*}
and $\eta > 0$ is the penalty coefficient used to penalize the violation of the equality constraints (\ref{PenltyA}) and (\ref{PenltyB}). Despite relaxing the equality
constraints in (\ref{PenltyA}) and (\ref{PenltyB}), it can be readily verified that the solutions obtained will always satisfy the equality constraints (i.e., binary value constraints of $\{\alpha_{k}[n]\}$ and $\{e_{k,j}[n]\}$), when $\frac{1}{\eta} \to \infty$. To facilitate efficient optimization, $\eta$ is initialized with a sufficiently large value and then we gradually reduce $\eta$ to a sufficiently small value. As a result, a feasible binary solution can be eventually obtained. In particular, the alternating optimization (AO) method is applied to iteratively optimize the primary variables in different blocks, as shown in Section \ref{InnerLayerAlgorithm}. 

\subsection{Inner and Outer layer Iteration}
\label{InnerLayerAlgorithm}
In this subsection, we propose a two-layer penalty-based algorithm. Specifically, in the inner layer, (P2) is divided into three sub-problems in which $\{\bar {\bm{A}}, \bar {\bm{E}}\}$, $\{ {\bm{A}},  {\bm{E}}\}$, and $\bm{Q}$ are optimized iteratively. In the outer layer, the penalty coefficient is updated to ensure that the constraints (\ref{PenltyA}) and (\ref{PenltyB}) are met eventually.

\subsubsection{Slack Variables Optimization} 
For any given $\{ {\bm{A}},  {\bm{E}}\}$ and $\bm{Q}$, (P2) can be expressed as
\begin{alignat}{2}
	\label{P2.1.1}
	(\rm{P2.1}): \quad & \begin{array}{*{20}{c}}
		\mathop {\min }\limits_{\bar{\bm{A}}, \bar{\bm{E}}} \quad   - \underline{R}
	\end{array} &  
\end{alignat} 
It is not difficult to find that the slack variables $\bar \alpha_k[n]$ and $\bar e_{k,j}[n]$ are only involved in the objective function. Thus, the optimal slack variables $\bar \alpha_k[n]$ and $\bar e_{k,j}[n]$ can be obtained by setting the derivative of (\ref{P2.1.1}) w.r.t. $\bar \alpha_k[n]$ and $\bar e_{k,j}[n]$ to zero, respectively, i.e.,
\begin{equation}\label{SlakeVariableA}
	\bar{\alpha}_{k}^{\mathrm{opt}}[n]=\frac{\alpha_{k}[n]+\alpha_{k}^{2}[n]}{1+\alpha_{k}^{2}[n]}, \forall k, n,
\end{equation}
\begin{equation}\label{SlakeVariableB}
	\bar{e}_j^{\mathrm{opt}}[n]=\frac{e_{k,j}[n]+e_{k,j}^{2}[n]}{1+e_{k,j}^{2}[n]}, \forall j, k, n.
\end{equation}
\subsubsection{User Association and Sensing Time Selection}
For any given $\{ \bar{\bm{A}},  \bar{\bm{E}}\}$ and $\bm{Q}$, (P2) can be expressed as
\begin{alignat}{2}
	\label{P2.1.2}
	(\rm{P2.2}): \quad & \begin{array}{*{20}{c}}
		\mathop {\min }\limits_{{\bm{A}}, {\bm{E}}} \quad   - \underline{R}
	\end{array} & \\ 
	\mbox{s.t.}\quad
	& 	 (\ref{UserAssociation}), (\ref{ConstraintUserAssociation})-(\ref{NewBeamPatternConstraints}), ({\ref{P2}a}). \nonumber
\end{alignat} 
It can be seen that problem (\ref{P2.1.2}) is convex with a quadratic objective function and linear inequality constraints, which can be solved by standard convex optimization solvers, such as CVX.

\subsubsection{Trajectory Optimization}
For given $\{\bar {\bm{A}}, \bar {\bm{E}}\}$ and $\{ {\bm{A}},  {\bm{E}}\}$, the UAV trajectory optimization sub-problem is given as follows
\begin{alignat}{2}
	\label{P2.1.4}
	(\rm{P2.3}): \quad & \begin{array}{*{20}{c}}
		\mathop {\max }\limits_{ {{\bm{Q}}}} \quad \mathop \frac{1}{N} \sum\nolimits_{n = 1}^{N} \sum\nolimits_{k = 1}^K  \underline R_{k}[n]
	\end{array} & \\ 
	\mbox{s.t.}\quad
	&  ({\ref{P1}}e), ({\ref{P1}}f), (\ref{NewBeamPatternConstraints}), (\ref{P2}a).  & \nonumber
\end{alignat} 

However, note that (P2.3) is neither concave or quasi-concave due to the non-convex constraints (\ref{P2}a), (\ref{P1}a) and the non-convex objective function (\ref{P2.1.4}). In general, there is no efficient method to obtain the optimal solution. In the following, we adopt the successive convex optimization technique to solve (P2.3). To this end, additional slack variables $\{z_{c,k}[n]\}$ and $\{z_{r,j}[n]\}$ are introduced, and $R^{C}_k[n]$ and $\underline {R}^{ISAC}_{k,j}[n]$ are recast as
\begin{equation}
	\tilde{R}^{C}_k[n] = B \log_2 \left(1 + \beta_{0}\frac{{P}_{\max} M    }{z_{c,k}[n]} \right),
\end{equation}
\begin{equation}
	\tilde{R}^{ISAC}_{k,j}[n] \!=\! \log_2\left( 1 \!+\! \gamma_0 \frac{{{M{P_{\max }} - z_{r,j}[n]{\Gamma ^{th}}}}}{{z_{c,k}[n] }}  \right),
\end{equation}
together with 
\begin{equation}\label{SlackZA}
	z_{c,k}[n] \ge {\|{{\bm{q}}[n] - {\bm{u}}_k}\|^2} + {H^2}, \forall k, n,
\end{equation}
\begin{equation}\label{SlackZB}
	z_{r,j}[n] \ge {\|{{\bm{q}}[n] - {\bm{v}}_j}\|^2} + {H^2}, \forall k, j,  n.
\end{equation}
For ease of analysis, this new constructed problem is denoted by (P2.4). It can be shown that at the optimal solution of variable $\tilde{R}^{C}_k[n]$ and $\tilde {R}^{ISAC}_{k,j}[n]$ in (P2.4), all the constraints in (\ref{SlackZA}) and (\ref{SlackZB}) are active, since otherwise we can always increase $z_{c,k}[n]$ or $z_{c,k}[n]$ without decreasing the value of the objective function. Hence, (P2.4) is equivalent to (P2.3). Since $\tilde R^{C}_k[n]$ is convex w.r.t. ${z_{c,k}[n] }$, for any local point ${z^{(r)}_{c,k}[n] }$ obtained at the $r$th iteration, we have 
\begin{equation}\label{RCRateExpression}
\begin{aligned}
\tilde{R}^{C} _k[n] =& \log_2 \left( 1 + \frac{A_k}{z_{c,k}[n]} \right) \ge \log_2 \left( 1 + \frac{A_k}{z^{(r)}_{c,k}[n]} \right) \\
& - \frac{A_k\left(z_{c,k}[n] - z^{(r)}_{c,k}[n] \right)}{(z^{(r)}_{c,k}[n]^2 + A_k z^{(r)}_{c,k}[n]) \ln 2}  = \hat{R}^{C} _k[n],	
\end{aligned}
\end{equation}
where $A_k = \frac{P_{\max} {M } {{\beta_0}}}{   {{\sigma _k^2} } }$. Then, $\tilde{R}^{ISAC}_{k,j}[n] = \log_2\left( {z_{c,k}[n] } \!+\! \gamma_0 {{{M{P_{\max }} - \gamma_0 z_{r,j}[n]{\Gamma ^{th}}}}}{}  \right) -  \log_2\left( {z_{c,k}[n] } \right)$. By introducing a new variable ${u_{c,k}[n] }$ stratifying   $ {u_{c,k}[n] } \le {z_{c,k}[n] } \!+\! \gamma_0 {{{M{P_{\max }} - z_{r,j}[n]{\Gamma ^{th}}}}}$, we have $\tilde{R}^{ISAC}_{k,j}[n] \ge \log_2\left({u_{c,k}[n] } \right) -  \log_2\left( {z_{c,k}[n] } \right)$. Similarly,  $\log_2\left( {z_{c,k}[n] } \right)$ can be transformed into a linear function of $z_{c,k}[n]$ using the same method applied in (\ref{RCRateExpression}). Then, the transformed $\tilde R^{ISAC}_{k,j}[n]$ is denoted by $\hat R^{ISAC}_{k,j}[n] = \log_2\left({u_{c,k}[n] } \right) - \log{ z^{(r)}_{c,k}[n]} - \frac{1}{ z^{(r)}_{c,k}[n] \ln 2 }\left(z_{c,k}[n] - z^{(r)}_{c,k}[n]\right)$, and (P2.4) can be converted into 
\begin{alignat}{2}
	\label{P2.1.5}
	(\rm{P2.5}): \quad & \begin{array}{*{20}{c}}
		\mathop {\max }\limits_{ {{\bm{Q}}}, \{{z}_{c,k}\}, \{z_{r,j}\}, \{u_{c,k}\}} \quad \mathop \frac{1}{N} \sum\nolimits_{n = 1}^{N} \sum\nolimits_{k = 1}^K    \hat R_k[n]
	\end{array} & \\ 
	\mbox{s.t.}\quad
	& ({\ref{P1}}e), ({\ref{P1}}f), (\ref{NewBeamPatternConstraints}),  (\ref{SlackZA}), (\ref{SlackZB}),  & \nonumber \\
	& \frac{1}{N_L}\sum\nolimits_{n = (l - 1){N_L} + 1}^{l  {N_L}}  \hat R_k[n] \ge {R}_k^{th}, \forall k, l, & \tag{\ref{P2.1.5}a} 
\end{alignat} 
where $\hat R_k[n] =  \alpha_{k}[n]  \hat R^{C}_k[n] +
\sum\nolimits_{j = 1}^J e_{k,j}[n] (\hat R^{ISAC}_{k,j}[n] - \tilde R^{C}_k[n])$. Based on the previous discussions, all of the constraints of (P2.5) are convex constraints. Thus, (P2.5) is a convex optimization problem that can be efficiently solved by convex optimization solvers such as CVX.

\subsubsection{Outer layer Iteration}

In the outer layer, the value of the penalty coefficient $\eta$ is gradually decreased by updating $\eta = z \eta$, where $z$ ($0 < z < 1$) is a scaling factor. A larger value of $z$ can achieve better performance but at the cost of more iterations in the outer layer.

\subsection{Convergence Analysis and Computational Complexity}%
To show the converged solutions of the proposed penalty-based algorithm, the terminal criteria for the outer layer is given as ${\max} ( |\alpha_k[n](1 - \bar \alpha_k[n])|, |  \alpha_k[n] - \bar \alpha_k[n]|, |e_{k,j}[n](1 - \bar e_{k,j}[n])|, |  e_{k,j}[n] - \bar e_{k,j}[n]|, \forall k, j, n ) \le \xi$, where $\xi $ is a predefined accuracy. The details of the proposed penalty-based algorithm are shown in {\bf{Algorithm} \ref{PenaltyBasedAlgorithm}}. In the inner layer, with the given penalty coefficient, the objective function of (P2) is non-increasing over each iteration during applying the AO method and the objective of (P2) is upper bounded due to the limited flying time $T$ and transmit power $P_{\max}$. As such, a stationary point can be achieved in the inner layer. In the outer layer, the penalty coefficient is gradually decreased so that the equality constraints (\ref{PenltyA}) and (\ref{PenltyB}) are ultimately satisfied. Based on Appendix B in \cite{Cai2017Joint}, this penalty-based framework is guaranteed to converge.

\begin{algorithm}[t]
	\small
	\caption{Penalty-Based Algorithm}
	\label{PenaltyBasedAlgorithm}
	\begin{algorithmic}[1]
		\STATE {\bf{Initialize}}  $\{{\bar {\bm{A}}^{(0)}}, {\bar {\bm{E}}^{(0)}}\}$, $\{{ {\bm{A}}^{(0)}}, { {\bm{E}}^{(0)}}\}$, and $\bm{Q}^{(0)}$, the iteration number $r = 1$, the convergence accuracy $\epsilon_1$ and $\epsilon_2$.
		\REPEAT 
		\REPEAT 
		\STATE With given $\{\{{ {\bm{A}}^{(r)}}, { {\bm{E}}^{(r)}}\}, \bm{Q}^{(r)}\}$, obtain $\{{\bar {\bm{A}}^{(r+1)}}, {\bar {\bm{E}}^{(r+1)}}\}$ based on (\ref{SlakeVariableA}) and (\ref{SlakeVariableB}).
		\STATE With given $\{\{{\bar {\bm{A}}^{(r)}}, {\bar {\bm{E}}^{(r)}}\}, \bm{Q}^{(r)}\}$, obtain $\{{ {\bm{A}}^{(r+1)}}, { {\bm{E}}^{(r+1)}}\}$ by solving the problem in (\ref{P2.1.2}).
		\STATE With given $\{\{{\bar {\bm{A}}^{(r)}}, {\bar {\bm{E}}^{(r)}}\}, \{{ {\bm{A}}^{(r)}}, { {\bm{E}}^{(r)}}\}\}$, and obtain $ \bm{Q}^{(r+1)}$ by solving the problem in (\ref{P2.1.5}). 
		\STATE Calculate $ C^{(r+1)*}$ according to the objective function of (P2).
		\STATE $r = r + 1$.
		\UNTIL $\left|C^{(r+1)*} - C^{(r)*}\right| \le \epsilon_1$
		\STATE $\eta = z \eta$.
		\UNTIL	the constraint violation in (\ref{PenltyA}) and (\ref{PenltyB}) is below a threshold $\epsilon_2$.
		\STATE Obtain ${\bm{w}}_c^*$ based on proposition {\ref{OptimalBearfoming}}.
		\STATE Recover optimal sensing time selection ${\bm{C}}^*$ based on ${ {\bm{A}}^*}$ and ${ {\bm{E}}^*}$.
	\end{algorithmic}
\end{algorithm}
\par 
The complexity of {\bf{Algorithm} \ref{PenaltyBasedAlgorithm}} can be analyzed as follows. In the inner layer, the main complexity of {\bf{Algorithm} \ref{PenaltyBasedAlgorithm}} comes from steps 5 and 6. In step 5, the complexity of computing $\{\alpha_{k}[n]\}$ and $\{e_{k,j}[n]\}$ is ${\cal{O}}(KN + JKN)^{3.5}$ \cite{zhang2019securing}, where $KN + JKN$ stands for the number of variables \cite{zhang2019securing}. Similarly, in step 6, the complexity required to compute the UAV trajectory is ${\cal{O}}(2N + KN + JN)^{3.5}$ \cite{zhang2019securing}, where $2N + KN + JN$ denotes the number of variables. Therefore, the total complexity of {\bf{Algorithm} \ref{PenaltyBasedAlgorithm}} is ${\cal{O}}( L_{outer}L_{inner}((KN + JKN)^{3.5} + (2N + KN + JN)^{3.5}))$, where $L_{inner}$ and $L_{outer}$ denote the number of iterations required for reaching convergence in the inner and outer layers, respectively.

\section{Analysis Without Location Constraints and Low-complexity Algorithm for Solving (P1)}
\label{WithoutLocationConstraint}
To draw important insights into periodic sensing and communication design, we further study a special case of (P1) where the initial and final location constraints are ignored, denoted by (P3). Specifically, (P3) is given as 
\begin{alignat}{2}
	\label{P3}
	(\rm{P3}): \quad & \begin{array}{*{20}{c}}
		\mathop {\max }\limits_{{{\bm{w}}_c}, {\bm{A}}, {{\bm{Q}}}, {\bm{C}}} \quad \mathop  \frac{1}{N}  \sum\nolimits_{n = 1}^N \sum\nolimits_{k = 1}^K \alpha_k[n] R_k[n]
	\end{array} & \\ 
	\mbox{s.t.}\quad
	&  (\ref{P1}a) - (\ref{P1}e). \nonumber
\end{alignat} 
In the following, we first present the structural characteristics of the optimal solutions in different ISAC frames of (P3). Based on this, a low-complexity algorithm to problem (P1) is proposed to solve (P1). 

\subsection{Analysis of Optimal Solution to (P3)}

For ease of analysis, denote ${\bm{{\cal{X}}}}_l[n] = \{{\bm{w}}_{c,l}^*[n], {\bm{\alpha}}_{l}[n]^*, {\bm{c}}_{l}[n]^*, {\bm{q}}^*_l[n]\}$ as the optimal solution of the $n$th time slot of the $l$th ISAC frame, where ${\bm{w}}_{c,l}^*[n]$, ${\bm{\alpha}}^*_l[n]$, ${\bm{c}}^*_{l}[n]$, and ${\bm{q}}^*_l[n]$ represent its corresponding optimal beamforming vector, user association, sensing time selection, and UAV trajectory at the $n$th time slot.

\begin{thm}\label{EqualForEachFrame}
	There always exists an optimal solution to problem (P3) satisfying the following condition:
	\begin{equation}\label{OptimalConditionISACFrame}
		{\bm{{\cal{X}}}}_{l'}[n] =\left\{\begin{array}{ll}
			{\bm{{\cal{X}}}}_{l}[n] , & \left|l-l'\right|\mid 2 \\
			{\bm{{\cal{X}}}}_{l}[N_L - n + 1] , & \left|l-l'\right|\nmid 2
		\end{array}\right.,
	\end{equation}
	where the symbols $\mid$ and $\nmid$ represent that $\left|l-l'\right|$ is divisible and not divisible by 2, respectively, $n \in \{1,\cdots,N_L\}$, and $l$, $l' \in {\cal{L}}$.
\end{thm}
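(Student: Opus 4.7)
The approach I would take is a rearrangement argument: starting from an arbitrary optimal solution of (P3), I would construct another optimal solution that realises the symmetric structure in (\ref{OptimalConditionISACFrame}) by extracting one ``best'' ISAC frame from the original solution and replicating it across all frames, alternating between the frame itself and its time reversal. The enabling observation is that (P3) removes the initial and final location constraints (\ref{P1}f), so the trajectory endpoints become free parameters that can be chosen to accommodate the rearrangement.

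In more detail, let $\{{\bm{{\cal{X}}}}_l[n]\}$ be an optimal solution of value $R^\ast$ and let $\bar R_l$ denote the time-averaged achievable rate contributed by frame $l$, so that $\tfrac{1}{L}\sum_{l=1}^{L}\bar R_l=R^\ast$. Choosing $l^\ast\in\arg\max_l \bar R_l$ yields $\bar R_{l^\ast}\ge R^\ast$. I would then define $\tilde{\bm{{\cal{X}}}}_l[n]={\bm{{\cal{X}}}}_{l^\ast}[n]$ when $|l-l^\ast|$ is even and $\tilde{\bm{{\cal{X}}}}_l[n]={\bm{{\cal{X}}}}_{l^\ast}[N_L-n+1]$ when $|l-l^\ast|$ is odd, which is exactly the form demanded by (\ref{OptimalConditionISACFrame}). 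Because each frame's inner sum over its $N_L$ slots is permutation-invariant, the new objective equals $\bar R_{l^\ast}\ge R^\ast$. All per-time-slot constraints (namely (\ref{P1}a) and (\ref{P1}d)) and all per-frame constraints (the sensing-frequency constraint (\ref{SensingFrequencyConstraints}), the TDMA constraint (\ref{UserAssociation}), and the QoS constraint (\ref{P1}c)) transfer directly, since each new frame is a permuted copy of a frame that was already feasible for the original solution.

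The one constraint whose preservation is not automatic is the velocity constraint (\ref{P1}e), since it is the only constraint that couples slots across frame boundaries; this is where I expect the main difficulty to lie. Within a single frame, time-reversal preserves all consecutive-slot distances, so (\ref{P1}e) is maintained inside each frame. At the boundary between frames $l$ and $l+1$, a brief parity check shows that in both cases (whether $l$ has the same parity as $l^\ast$ or not) the pair $(\tilde{\bm{q}}_l[N_L],\tilde{\bm{q}}_{l+1}[1])$ collapses to a common endpoint of ${\bm{q}}_{l^\ast}$ --- either ${\bm{q}}_{l^\ast}[N_L]$ or ${\bm{q}}_{l^\ast}[1]$ --- so the inter-frame hop has length zero and (\ref{P1}e) holds trivially. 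It is precisely this cancellation that forces the alternating reversal rather than plain replication, and that explains why (P1) with its endpoint constraints cannot directly inherit (\ref{OptimalConditionISACFrame}). Combining the feasibility of $\{\tilde{\bm{{\cal{X}}}}_l[n]\}$ with $\bar R_{l^\ast}\ge R^\ast$ forces $\bar R_{l^\ast}=R^\ast$, so $\{\tilde{\bm{{\cal{X}}}}_l[n]\}$ is an optimal solution of the desired symmetric form.
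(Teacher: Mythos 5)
Your proposal is correct and follows essentially the same route as the paper: select the frame with the largest per-frame rate, replicate it across all frames with alternating time reversal, and verify feasibility. Your treatment is in fact more explicit than the paper's, particularly in spelling out why the inter-frame hop in the speed constraint (\ref{P1}e) collapses to zero at each frame boundary, which the paper only asserts implicitly via ``considering the maximum speed constraint.''
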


\begin{proof}
	Assume that at the optimal solution to problem (P3), the maximum sum achievable rate of the $l$th ISAC frame is denoted by $C^*_l$, its corresponding optimal beamforming vector, user association, sensing time selection, and UAV trajectory are denoted by $\{{\bm{{\cal{X}}}}_l[n]\}_{n=l}^{N_L}$. Without loss of generality, we assume that the sum achievable rate $C^*_l$ of the $l$th ISAC frame is the largest in the set $\{C^*_1,\cdots,C^*_L\}$. We can always obtain a solution of the $l'$th ISAC frame by reorganizing the elements in $\{{\bm{{\cal{X}}}}_l[n]\}_{n=l}^{N_L}$ while satisfying the constraints in (\ref{P1}a)-({\ref{P1}f}), and its corresponding sum achievable rate $ C^*_l \ge C^*_{l'}$. Specifically, considering the maximum speed constraint, when $l' = l + 2 i + 1$, $i \in \mathbb{Z}$, a solution whose achievable rate is no less than $C^*_l$ can be constructed by reversing the sequence of that within the $l$th ISAC frame, i.e., ${\bm{{\cal{X}}}}_{l'}[n] = {\bm{{\cal{X}}}}_{l}[N_L - n + 1]$, $n \in \{1,\cdots,N_L\}$. Similarly, when $l' = l + 2 i$, $i \in \mathbb{Z}$, we can readily prove that the solution of $l$th is also feasible for the $l'$th ISAC frame, i.e., ${\bm{{\cal{X}}}}_{l'}[n] = {\bm{{\cal{X}}}}_{l}[n]$. By combing the above results, there always exists an optimal solution to problem (P3) satisfying the condition in (\ref{OptimalConditionISACFrame}). This thus completes the proof.
\end{proof} 

\begin{remark}\label{RemarkSameFrame}
	According to Lemma {\ref{EqualForEachFrame}}, there always exists an optimal solution to problem (P3) in the $l$th ISAC frame, which is exactly equal or opposite in sequence to that of the $l'$th ISAC frame. Specifically, for any two time slot $n_1$ and $n_2$ belong to two adjacent ISAC frames, the optimal solution at time slot $n_1$ and that at time slot $n_2$ are equal when $n_1 + n_2 = l N_L +1$, i.e., $n_1$ and $n_2$ are symmetrical with respect to the time instant $l T_L / 2$, where $l$ is an even number. Hence, Lemma 4 implies that although the UAV trajectories within different ISAC frames are coupled with each other due to the maximum speed constraint, problem (P3) can be solved by only obtaining the solution in the first ISAC frame, while the solutions of other ISAC frames can be obtained based on (\ref{OptimalConditionISACFrame}). In particular, the solution of the first ISAC frame for problem (P3) can be efficiently solved by {\bf{Algorithm} \ref{PenaltyBasedAlgorithm}} due to the similar constraints and objective function. 
\end{remark}

\begin{Pro}\label{IncreaseNL}
	The maximum achievable rate in (P3) increases monotonically as $T_L$ increases. 
\end{Pro}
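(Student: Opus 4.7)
\textbf{Proof plan for Proposition \ref{IncreaseNL}.}
The plan is to show that any optimal solution of $(\mathrm{P3})$ at frame length $T_L^{(1)}$ induces a feasible point of $(\mathrm{P3})$ at a larger frame length $T_L^{(2)}$ whose objective value is at least as large; monotonicity then follows immediately. I will focus on the elementary case $T_L^{(2)} = k\, T_L^{(1)}$ with $k$ a positive integer (equivalently, $N_L^{(2)} = k N_L^{(1)}$), since any two valid frame lengths (i.e.\ divisors $N_L$ of $N$) can be compared by transitivity through a common multiple that still divides $N$.

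Let $({\bm{w}}_c^{*},{\bm{A}}^{*},{\bm{Q}}^{*},{\bm{C}}^{*})$ be an optimum of $(\mathrm{P3})$ under $T_L^{(1)}$. I would keep the UAV trajectory ${\bm{Q}}^{*}$ and the user association ${\bm{A}}^{*}$ unchanged and repartition the horizon into the coarser $T_L^{(2)}$-frames, each spanning exactly $k$ consecutive old frames. The sensing equality (\ref{SensingFrequencyConstraints}) guarantees that each new frame initially contains exactly $k$ sensing instants for every target $j$; I would retain only one of them (say the earliest) and reset $c_j[n]=0$ on the remaining $k-1$ slots, thereby converting them to communication-only slots. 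On the retained sensing slots the beamformer is inherited; on the converted slots I would replace it with the MRT $\sqrt{P_{\max}}\,{\bm{h}}_{c,k}/\|{\bm{h}}_{c,k}\|$ from Proposition~\ref{OptimalBearfoming}, which is optimal when no sensing constraint is active.

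The main obstacle is verifying the per-frame QoS requirement (\ref{P1}c) on the coarser grid. The key observation, which I would establish by applying the Cauchy--Schwarz inequality to (\ref{OptimalGamma}) with $a=\sqrt{\Gamma_j^{th}}$ and $b=\sqrt{M P_{\max}/d({\bm{q}}[n],{\bm{v}}_j)^2 - \Gamma_j^{th}}$, is that $\gamma_{k,j}^{*}\le \gamma_0 M P_{\max}/d({\bm{q}}[n],{\bm{u}}_k)^2$, hence $R_{k,j}^{ISAC}[n]\le R_k^{C}[n]$ for every slot. Therefore, converting a sensing slot to a communication-only slot weakly increases its rate, and the average rate of user $k$ over any $T_L^{(2)}$-frame is at least the mean of the $k$ original $T_L^{(1)}$-frame averages, each of which was $\ge R_k^{th}$ in the old optimum.

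The remaining feasibility checks are immediate: (\ref{SensingFrequencyConstraints}) and the at-most-one-sensing-per-slot condition hold by construction; the beam-pattern requirement (\ref{P1}a) is inherited on retained sensing slots and vacuous on converted ones; both the power cap (\ref{P1}d) and the mobility constraint (\ref{P1}e) are preserved verbatim since the trajectory and beamformer norms are unchanged. Because every slot's rate weakly increases, the constructed point achieves an objective value no smaller than the $T_L^{(1)}$-optimum, which establishes $\max(\mathrm{P3})|_{T_L^{(2)}} \ge \max(\mathrm{P3})|_{T_L^{(1)}}$ and hence the claimed monotonicity.
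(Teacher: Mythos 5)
Your core construction for the nested case $N_L^{(2)}=kN_L^{(1)}$ is sound, and it rests on a genuinely different idea from the paper's proof: the Cauchy--Schwarz bound $\bigl(\sqrt{\Gamma^{th}}\cos\varphi_{k,j}+\sqrt{MP_{\max}/d({\bm{q}}[n],{\bm{v}}_j)^2-\Gamma^{th}}\sin\varphi_{k,j}\bigr)^2\le MP_{\max}/d({\bm{q}}[n],{\bm{v}}_j)^2$ applied to (\ref{OptimalGamma}) does give $R^{ISAC}_{k,j}[n]\le R^{C}_k[n]$, so deleting redundant sensing instants and reverting those slots to MRT weakly increases every slot rate, preserves the QoS constraint (\ref{P1}c) on the coarser frames (each new frame is a union of $k$ old frames, each already meeting the threshold), and leaves the remaining constraints untouched. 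The paper instead invokes Lemma \ref{EqualForEachFrame} to equalize the per-frame rates and then pads each frame with $N'_L-N_L$ hovering slots at that frame's best location; your route keeps the trajectory fixed and avoids Lemma \ref{EqualForEachFrame} altogether, which is cleaner --- but only for nested frame lengths.

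The genuine gap is the reduction to general frame lengths. Comparing two admissible values $N_L^{(1)}<N_L^{(2)}$ (both divisors of $N$) ``through a common multiple'' $M$ only yields that the optimal values under $N_L^{(1)}$ and under $N_L^{(2)}$ are \emph{both} upper-bounded by the optimal value under $M$; that does not order them relative to each other, so transitivity buys you nothing when $N_L^{(1)}\nmid N_L^{(2)}$. Moreover the direct construction really does break in that case: take $N=30$, $N_L^{(1)}=5$, $N_L^{(2)}=6$. The new frame $\{7,\dots,12\}$ contains no complete old frame, so if target $j$ happens to be sensed at slot $6$ (inside old frame $\{6,\dots,10\}$) and at slot $15$ (inside old frame $\{11,\dots,15\}$), then no inherited sensing instant of $j$ lies in $\{7,\dots,12\}$, and the equality in (\ref{SensingFrequencyConstraints}) cannot be satisfied by merely deleting sensing slots. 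To close the gap you must either restrict the claim to nested $T_L$'s, add a relocation step that moves a sensing instant to a nearby slot and re-verifies the beam-pattern constraint (\ref{P1}a) there (which reintroduces a trajectory argument), or adopt the paper's padding construction, which handles arbitrary $N'_L>N_L$ directly.
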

\begin{proof}
	Based on Lemma \ref{EqualForEachFrame}, there always exists an optimal solution, whose achievable rate in each ISAC frame is equal, denoted by $C^*_l$. For any given $T_L$, assume that at the optimal solution to problem (P3), the optimal beamforming, user association, sensing time slots, and UAV trajectory of the $l$th ISAC frame are denoted by ${\bm{w}}_{c,l}^*$, ${\bm{A}}^*_l$, ${\bm{C}}^*_{l}$, and ${\bm{Q}}^*_l$, respectively. Without loss of generality, the maximum achievable rate in the $l$th frame is denoted by $R^{\max}$, its corresponding time slot and UAV location are denoted by $n^{\max}$ and ${\bm{q}}^*[n^{\max}]$. Based on the above discussion, for $N'_L > N_L$, there always exists a solution, in which the UAV trajectory in the $l$th ISAC frame can be given by
	\begin{equation}\label{TrajectoryN_DeltaL}
		\begin{aligned}
			{\bm{Q}}'_l & \!=\! \{{\bm{q}}_l^*[1], \cdots, {\bm{q}}_l^*[n^{\max}-1], \underbrace{{\bm{q}}_l^*[n^{\max}],\cdots,{\bm{q}}_l^*[n^{\max}]}_{N'_L - N_L + 1},  \\
			& {\bm{q}}_l^*[n^{\max}+1], \cdots, {\bm{q}}_l^*[N_L]\},
		\end{aligned}
	\end{equation}
	and its corresponding beamforming, user association, and sensing time selection is set as the same with that of solution $\{\{{\bm{w}}_{c,l}^*, {\bm{w}}_{r,l}^*\}, {\bm{A}}_{l}^*, {{\bm{C}}}^*_{l}\}$ based on the UAV location. Let $\Delta l = \frac{N}{N_L} - \frac{N}{N'_L} \in \mathbb{Z}$. Then, the achievable rate based on the UAV trajectory in (\ref{TrajectoryN_DeltaL}) can be given by
	\begin{equation}
		\begin{aligned}
			&(L - \Delta l) \left( C^*_l + (N'_L - N_L) R^{\max} \right) \\
			=& (L - \Delta l)   C^*_l + \Delta l  N_L R^{\max}  \ge L  C^*_l.
		\end{aligned}
	\end{equation}
	Hence, the achievable rate with frame length $N'_L$ is no less than that with frame length $N_L$, thus completing the proof.
\end{proof}

In Proposition \ref{IncreaseNL}, we reveal a useful and fundamental trade-off between sensing frequency and communication rate. Note that the above interesting results not only help solve problem (P3) more efficiently but also provide a novel idea to construct a high-quality solution to problem (P1), as elaborated below. 

\subsection{Low-Complexity Algorithm for solving (P1)}
\label{LowComplexitySolution}
The large mission period $T$ may entail a large number of trajectory points in practice, thus resulting in prohibitive computational complexity for the UAV trajectory design. To handle this problem, a low-complexity method to (P1) is presented based on our derived structural characteristics among ISAC frames (c.f. Lemma \ref{EqualForEachFrame}). To facilitate the analysis, we introduce problem (P3.1) as the achievable rate maximization problem in the case without the initial and final location constraints, which can be expressed similarly as (P2) by removing constraint (\ref{P1}f). 

If the optimal achievable rate of problem (P3.1) is denoted by $R^*$, it is not difficult to find that the optimal achievable rate of problem (P1) equals to $R^*$ when $T \to \infty$. The optimized UAV trajectory of problem (P3.1) obtained via {\bf{Algorithm} \ref{PenaltyBasedAlgorithm}} is denoted by ${\bm{Q}}' = \{{\bm{q}}'[1], \cdots, {\bm{q}}'[N_L]\}$. Then, a high-quality and low-complexity UAV trajectory of (P1) can be obtained by composing three sub-trajectories: The UAV first flies straightly at its maximum speed from the initial location ${\bm{q}}_I$ towards ${\bm{q}}^*[1]$ or ${\bm{q}}^*[N_L]$ (closer one); then flies back and forth along the trajectory ${\bm{Q}}'$; finally flies straightly at its maximum speed to final location ${\bm{q}}_F$. Furthermore, the corresponding optimized sensing time selection and user association for this constructed UAV trajectory can be solved by {\bf{Algorithm} \ref{PenaltyBasedAlgorithm}} in a similar way. The complexity of this constructed solution is mainly determined by the step of obtaining the solution ${\bm{Q}}'$, which is about $\frac{L^{3.5}-1}{L^{3.5}} 100 \%$ percent reduced as compared to that of solving (P1) via {\bf{Algorithm} \ref{PenaltyBasedAlgorithm}} directly. In particular, this low-complexity algorithm is preferred when the number of frames $L$ is relatively large. 

\section{Numerical Results}
\label{Simulations}

\par 
In this section, numerical results are provided for characterizing the performance of the proposed periodic sensing and communication design and for gaining insights into the design and implementation of UAV-based ISAC systems. In the simulation, we consider an area of 1 km $\times$ 1 km with $K=4$ users and $J=4$ targets in the interested sensing area. Unless otherwise stated, the system parameters are set as follow. The number of antennas at the UAV $M = 16$ ($M_x = M_y = 4$), and the beam pattern gain threshold $\Gamma^{th} = 6e^{-5}$. The UAV's maximum horizontal flight speed is set as $V_{\max} = 30$ m/s with the flight altitude $H = 40$ m. In addition, the channel power gain at the reference distance $d_0 = 1$ m and the noise power at each user are set as $\beta_{0} = -30$ dB and $\sigma^2 = -100$ dB, respectively, and the maximum transmit power is $P_{\max} = 0.1$ W. The flight period, ISAC frame length, and time slot length are denoted by $T = 80$ s, $T_L = 20$ s, and $\delta_t = 0.25 $ s, respectively. The minimum achievable rate requirement is set as $R^{th}_k = 0.25$ bps/Hz.

We compare our proposed mechanism to two benchmarks:
\begin{itemize}
	\item {\bf{Straight flight (SF)}}: The UAV flies from the initial location ${\bm{q}}_I$ to the final location ${\bm{q}}_F$ along the straight line at the constant speed of $\frac{|{\bm{q}}_F - {\bm{q}}_I|}{T}$. 
	\item {\bf{Fly-Hover-Fly (FHF)}}: The UAV flies straightly at its maximum speed from the initial location to the optimized location obtained via solving the following problem:
	\begin{alignat}{2}
		\label{P3.1.6}
		& \begin{array}{*{20}{c}}
			\mathop {\max }\limits_{ {{\bm{w}}}_c, {{\bm{q}}}, {\bm{A}}, {\bm{C}}} \quad \mathop \sum\nolimits_{k = 1}^K  R_k
		\end{array} & \\ 
		\mbox{s.t.}\quad
		& ({\ref{P1}}a)-({\ref{P1}}d).  & \nonumber
	\end{alignat} 
	After hovering at this optimized location, the UAV flies straightly at its maximum speed to the final location.
\end{itemize}
Except for the UAV trajectory, the corresponding beamforming, user association, and sensing time slots during the flight period of these two benchmarks are obtained by {\bf{Algorithm} \ref{PenaltyBasedAlgorithm}} without updating the UAV trajectory. 

\begin{figure*}[h]
	\centering
	\setlength{\abovecaptionskip}{0.cm}
	
	\subfigure[$\Gamma^{th} = 0$.]
	{	
		\label{figure7a}
		\includegraphics[width=7.1cm]{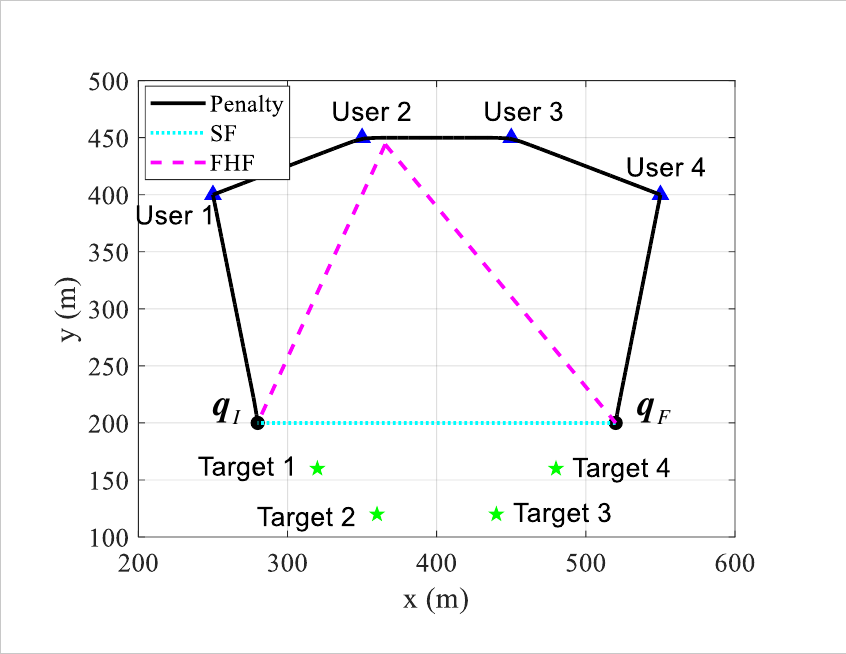}
	}
	\hspace{15mm}
	\subfigure[$\Gamma^{th} = 2 \times 10^{-5}$.]
	{	
		\label{figure7b}
		\includegraphics[width=7.1cm]{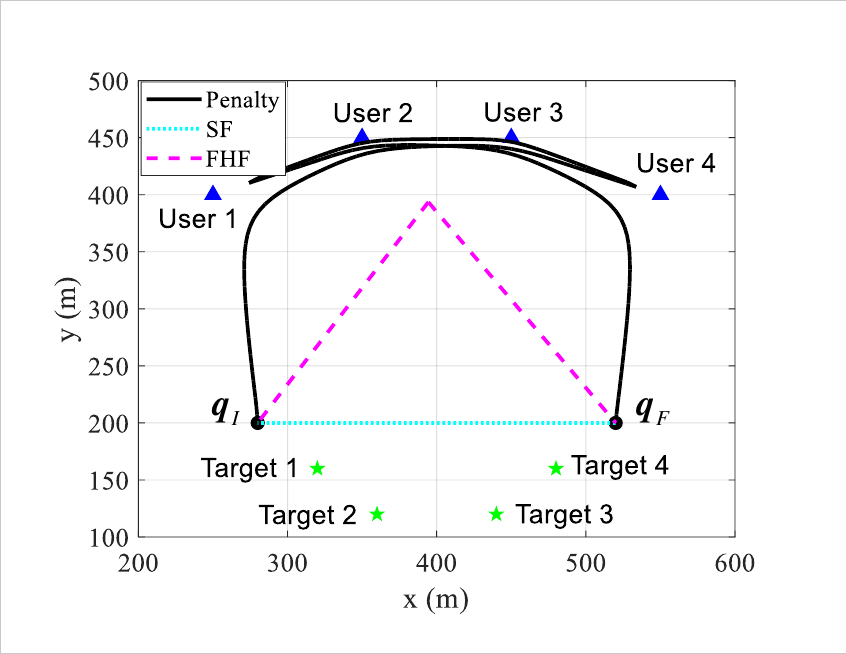}
	}
	\subfigure[$\Gamma^{th} = 6 \times 10^{-5}$.]
	{	
		\label{figure7c}
		\includegraphics[width=7.1cm]{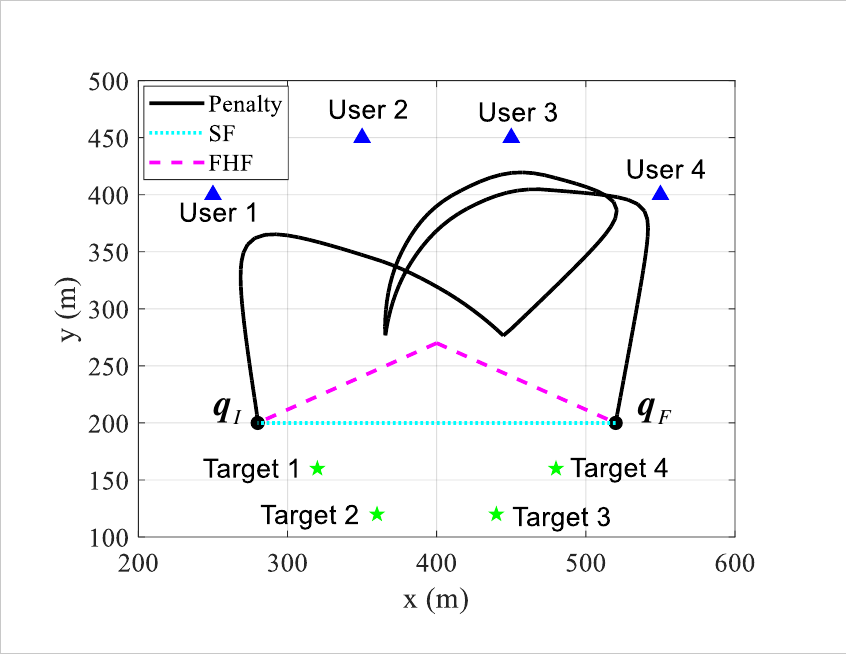}
	}
	\hspace{15mm}
	\subfigure[$\Gamma^{th} = 12 \times 10^{-5}$.]
	{	
		\label{figure7d}
		\includegraphics[width=7.1cm]{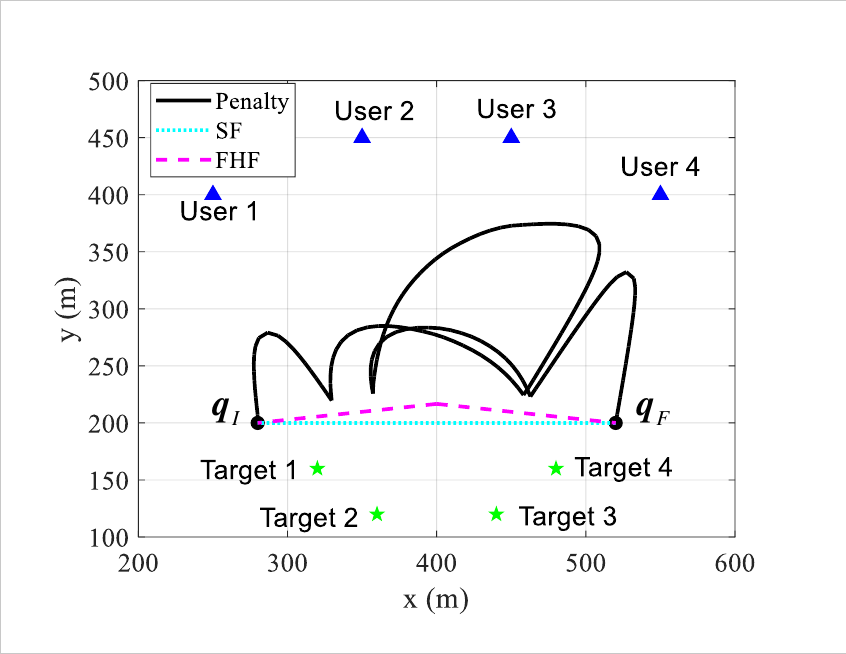}
	}
	\caption{UAV trajectories comparisons among the proposed penalty-based algorithm and benchmarks under different $\Gamma^{th}$ ($T = 40$ s and $T_L = 20$ s).}
	\label{figure7}
\end{figure*}

\subsection{Comparison Versus Sensing Power Requirement}

In Figs.~\ref{figure7} and \ref{figure8}, the UAV trajectories and the maximum achievable rate are illustrated respectively under different beam pattern gain thresholds $\Gamma^{th}$ for our proposed penalty-based algorithm (Solving problem (P1)) and benchmark schemes. Specifically, it can be observed from Fig.~\ref{figure7} that as the beam pattern gain threshold $\Gamma^{th}$ increases, the UAV's trajectory shrinks gradually from a relatively larger arc toward users to several smaller arcs between the targets and the users; the closest distance from the UAV to the users also increases since the UAV needs to perform sensing tasks at a location closer to the targets. In particular, when $\Gamma^{th} = 0$, i.e., no beam pattern gain constraint is considered as in \cite{Wu2018Common}, the UAV sequentially visits and stays above each of the users by maximally exploiting its mobility; while when $\Gamma^{th} = 12 \times 10^{-5}$, the UAV flies within a smaller region close to the targets due to the higher sensing power requirement. Notice that in this setup, the closer the UAV flies to the targets, the farther it is away from the communication users inevitably. As a result, satisfying the beam pattern gain requirements of the targets will consume more transmit power and thus becomes the bottleneck for improving the maximum achievable rate of the system. Such a situation will become worse when the beam pattern gain and/or the distance between the users and the targets becomes larger.

\begin{figure}[t]
	\centering
	\setlength{\abovecaptionskip}{0.cm}
	\includegraphics[width=7.1cm]{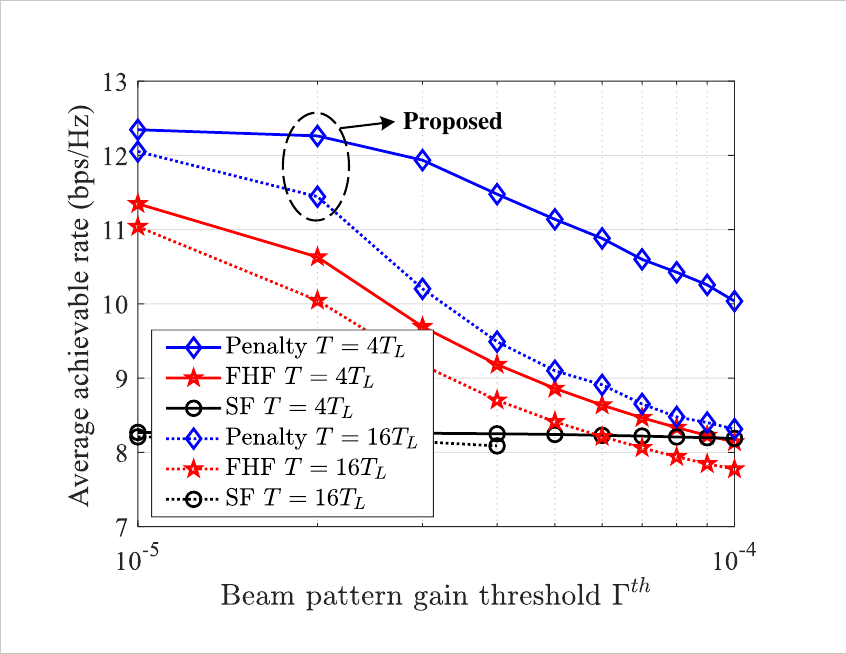}
	\caption{Achievable rate versus beam pattern gain threshold.}
	\label{figure8}
\end{figure}

The effect of the beam pattern gain constraints on the maximum achievable rate is shown in Fig.~\ref{figure8}. It is observed from Fig.~\ref{figure8} that the achievable rate gradually decreases as the beam pattern gain threshold $\Gamma^{th}$ increases. Also, the achievable rate gain achieved by our proposed scheme over the "SF" scheme increases as the sensing power requirement decreases, since the UAV's trajectory can be optimized in a larger feasible region for communication performance improvement. When the beam pattern gain threshold $\Gamma^{th}$ is larger than $4 \times 10^{-5}$, the "SF" scheme will become infeasible under the high-frequency sensing requirement, since the QoS constraints of users and the beam pattern gain constraints of targets cannot be satisfied without optimizing UAV trajectory. Moreover, the achievable rate of our proposed scheme achieves significant improvement as compared to the "FHF" scheme under lower sensing frequency, since the low-frequency sensing scenario shares more communication-only time slots in each ISAC frame for improving communication performance.  

\subsection{Comparison Versus Sensing Frequency}
\begin{figure*}[t]
	\centering
	\setlength{\abovecaptionskip}{0.cm}
	\subfigure[$T_L = 40 s$.]
	{	
		\label{figure9a}
		\includegraphics[width=7.1cm]{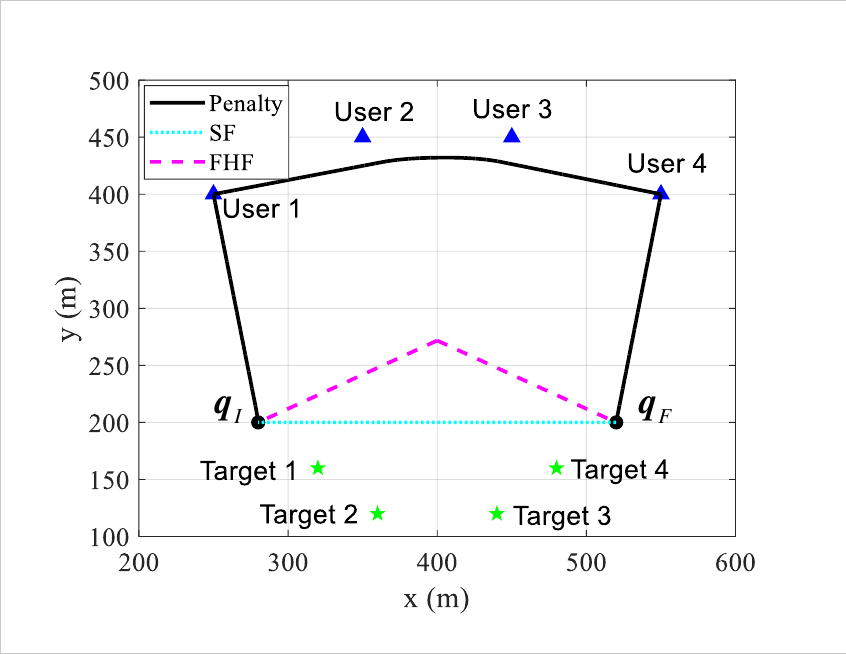}
	}
	\hspace{15mm}
	\subfigure[$T_L = 20 s$.]
	{	
		\label{figure9b}
		\includegraphics[width=7.1cm]{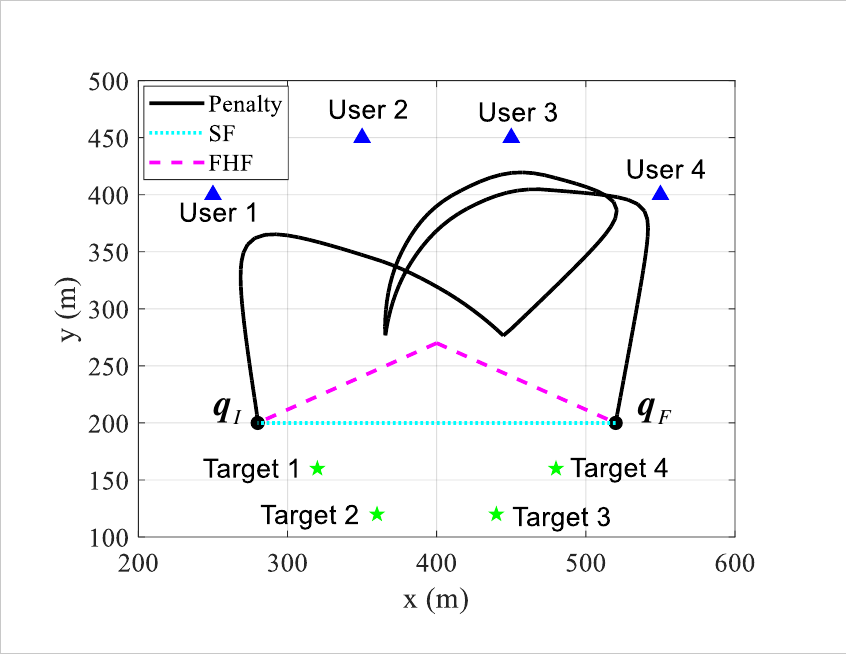}
	}
	\subfigure[$T_L = 10 s$.]
	{	
		\label{figure9c}
		\includegraphics[width=7.1cm]{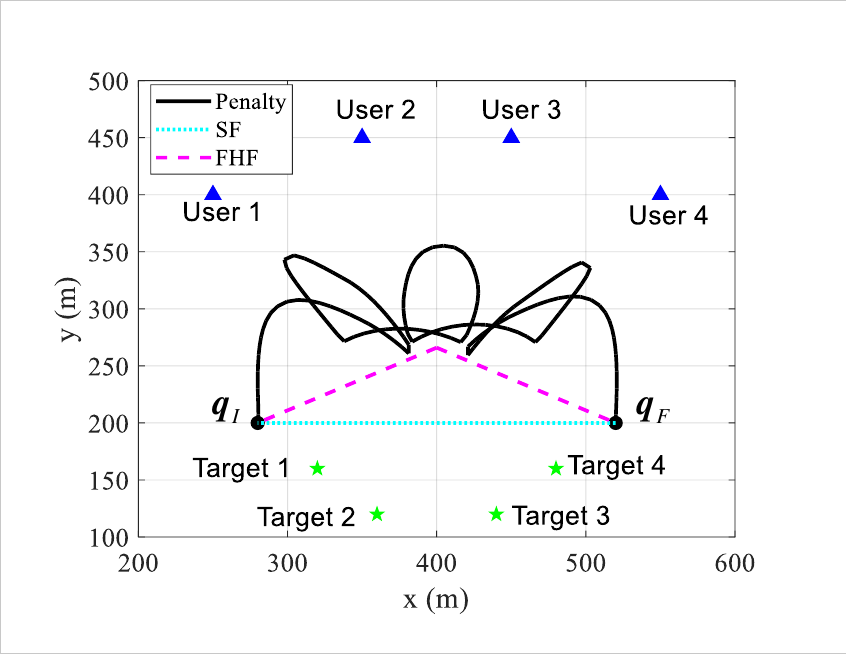}
	}
	\hspace{15mm}
	\subfigure[$T_L = 5 s$.]
	{	
		\label{figure9d}
		\includegraphics[width=7.1cm]{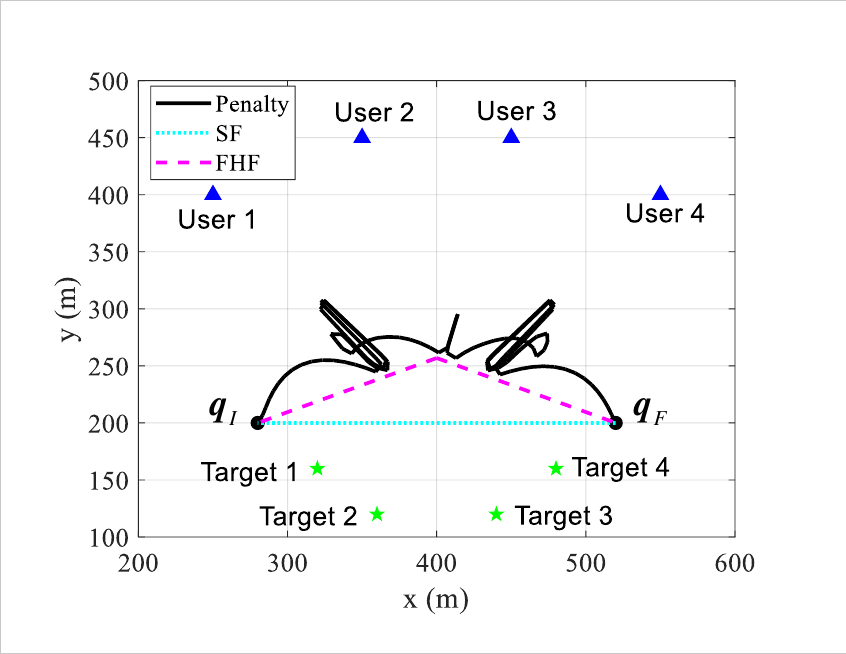}
	}	
	\caption{UAV trajectories comparisons among the proposed penalty-based algorithm and benchmarks under different sensing frequency (defined as ${1 \mathord{\left/{\vphantom {1 {{T_L }}}} \right.\kern-\nulldelimiterspace} {{T_L}}}$).}
	\label{figure9}
\end{figure*}
In Figs.~\ref{figure9} and \ref{figure10}, we show the UAV trajectories and the maximum achievable rate under different sensing frequency (defined as ${1 \mathord{\left/{\vphantom {1 {{T_L }}}} \right.\kern-\nulldelimiterspace} {{T_L}}}$) for our proposed penalty-based algorithm and benchmark schemes. Specifically, it can be observed from Fig.~\ref{figure9} that as the sensing frequency increases, the UAV's trajectory shares more turn-backs between the targets and the users since there exist more ISAC frames within a given flight period $T = 40$ s. In particular, when $T = T_L$, i.e., there is only one sensing time for each target, the UAV can almost fly above each of the users to achieve better air-to-ground channels between the UAV and each user; when $T = 8 T_L$, the UAV trajectory consists of multiple almost overlapping trajectory segments between the targets and one certain user. Generally speaking, as the sensing frequency increases, the UAV trajectory tends to be more restricted to avoid getting too far away from any of the targets. 

Fig.~\ref{figure10} shows the performance comparison among sensing power requirement, sensing frequency, and achievable rate. Specifically, as the sensing frequency increases, the achievable rate of all the considered mechanisms decreases, which validates the analysis in Proposition \ref{IncreaseNL}. Also, the achievable rate of our proposed algorithm under a higher beam pattern gain threshold ${{\Gamma}^{th}}$ degrades faster as compared to that under a lower threshold. The main reason is that a higher beam pattern gain threshold forces the UAV to perform sensing tasks at a location closer to the target, thereby resulting in increasing path loss within the communication-only duration. Furthermore, it is observed from Fig.~\ref{figure10} that the achievable rate gain achieved by our proposed scheme over the "FHF" and "SF" schemes increases as the sensing frequency decreases, as the UAV has more non-sensing time to adjust its trajectory for communication performance improvement. 

\begin{figure}[t]
	\centering
	\setlength{\abovecaptionskip}{0.cm}
	\includegraphics[width=7.1cm]{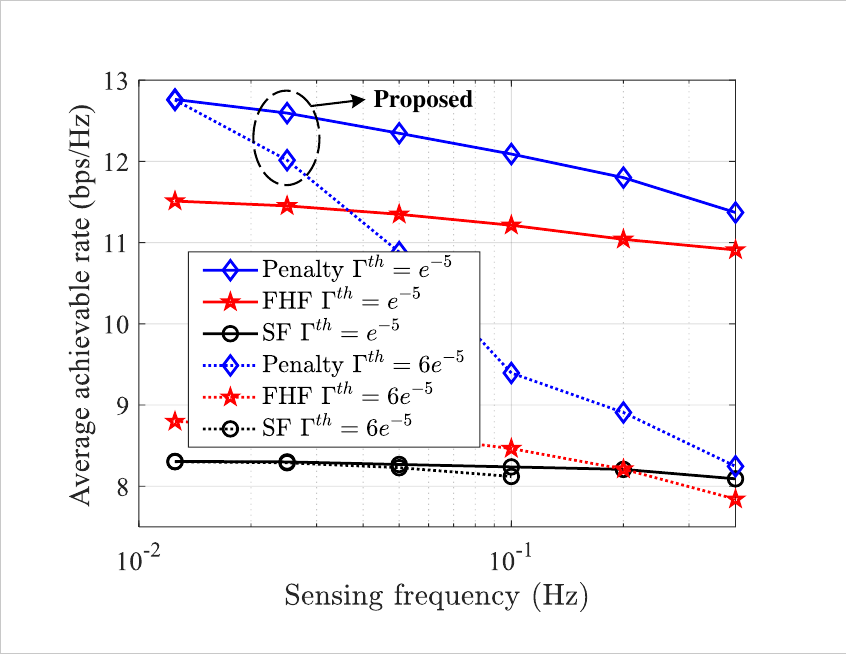}
	\caption{Achievable rate versus sensing frequency requirement.}
	\label{figure10}
\end{figure}

\subsection{User-Target Association and Beam Pattern}
Next, the user association and target selection at the sensing time slots are shown in Fig.~\ref{figure11}, where $T = T_L = 40$ s, and $\Gamma^{th} = 10^{-3}$. The UAV's flight speed is illustrated in Fig.~\ref{figure11a}, where the user association and target selection are represented by blue and green dashed lines, respectively. Besides, it can be seen from Fig.~\ref{figure11a} that the UAV tends to provide the communication service for the user which is closer to the associated target. The beam pattern gains in space at two selected sensing time slots are shown in Figs.~\ref{figure11b} and \ref{figure11c}, where the beams are mainly concentrated in the direction of the selected target's location and the associated user's position.
\begin{figure*}[htpb]
	\centering
	\setlength{\abovecaptionskip}{0.cm}
	\subfigure[UAV trajectory with its speed.]
	{	
		\label{figure11a}
		\includegraphics[width=5.3cm]{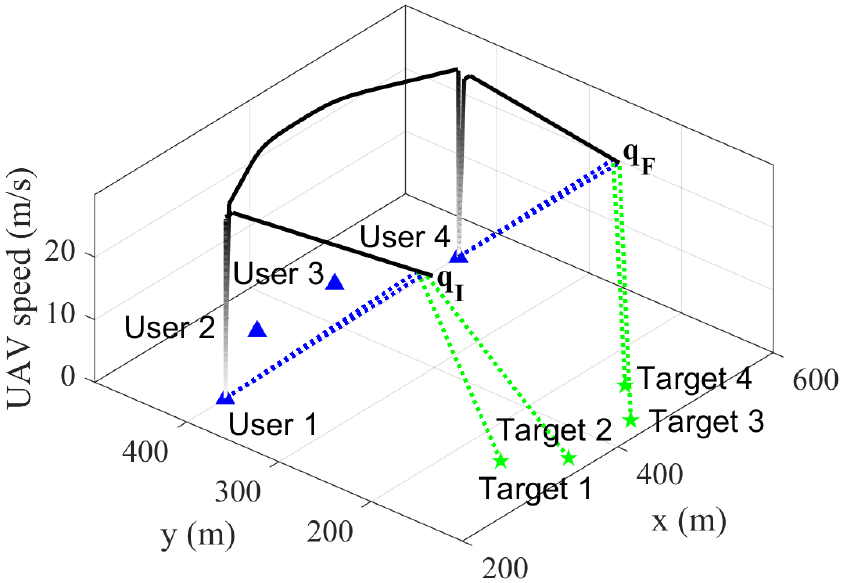}
	}
	\subfigure[Beam pattern at $n$ = 2.]
	{	
		\label{figure11b}
		\includegraphics[width=5cm]{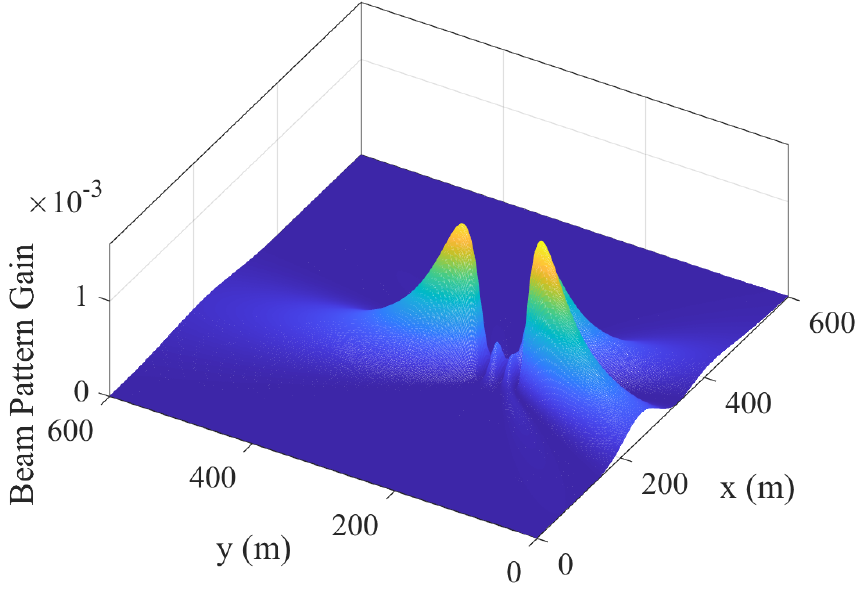}
	}
	\subfigure[Beam pattern at $n$ = 159.]
	{	
		\label{figure11c}
		\includegraphics[width=5cm]{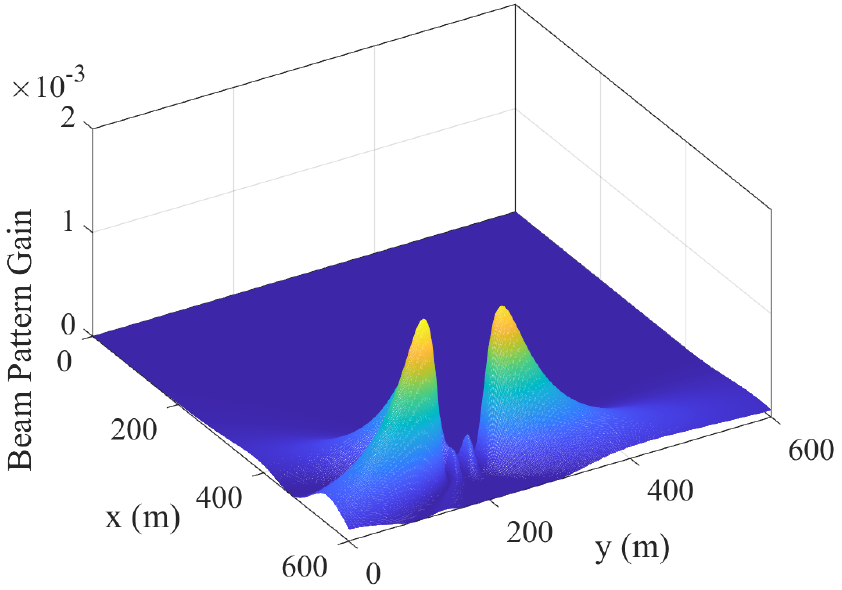}
	}
	\caption{UAV trajectory and its corresponding beam pattern gain at sensing time slots.}
	\label{figure11}
\end{figure*}

\subsection{Lower Bound's Gap and Low Complexity Method}
\label{LowerBoundDifference}
Moreover, since problem (P2) is one approximation of problem (P1), we substitute the optimized solution obtained by {\bf{Algorithm} \ref{PenaltyBasedAlgorithm}} back into the objective function of problem (P1) to obtain the actual achievable user rate, as shown in Fig.~\ref{figure15} for comparison. Specifically, the difference of average achievable rate during sensing time between original objective value $R^{ISAC}_{k,j}$ and approximate objective value $\underline{R}^{ISAC}_{k,j}$ will decrease as the number of antennas increases, where $M_x = M_y$. In particular, the average achievable rate of the original objective value is almost approximated to the objective value (less than 1$\%$) when the number of antennas $M$ is larger than 16, which justifies the accuracy of our derived lower bound in Lemma \ref{OptimalBeamformingLower}.

\begin{figure*}[t]
	\centering
	\setlength{\abovecaptionskip}{0.cm}
	\subfigure[Comparisons under different number of antennas.]
	{	
		\label{figure15}
		\includegraphics[width=7.1cm]{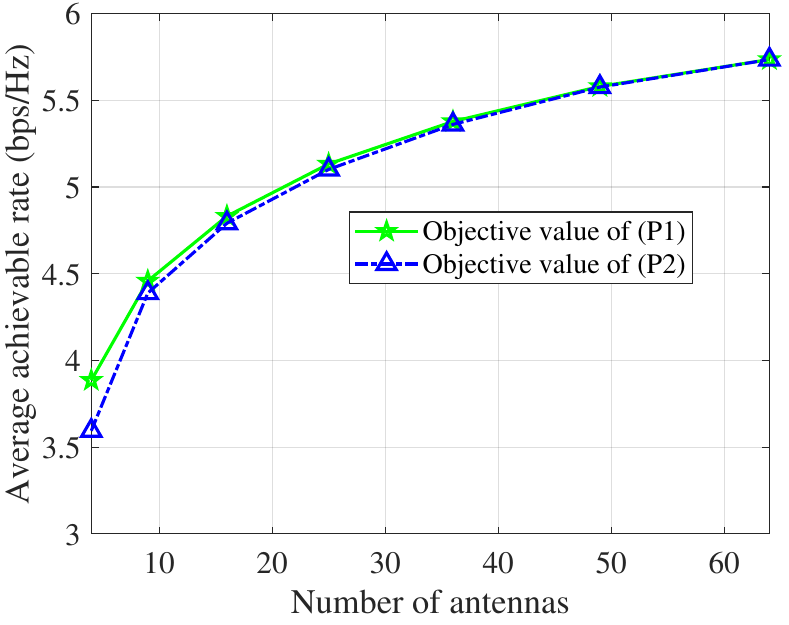}
	} \hspace{15mm}
	\subfigure[Comparisons versus different flight periods.]
	{	
		\label{figure16}
		\includegraphics[width=7.1cm]{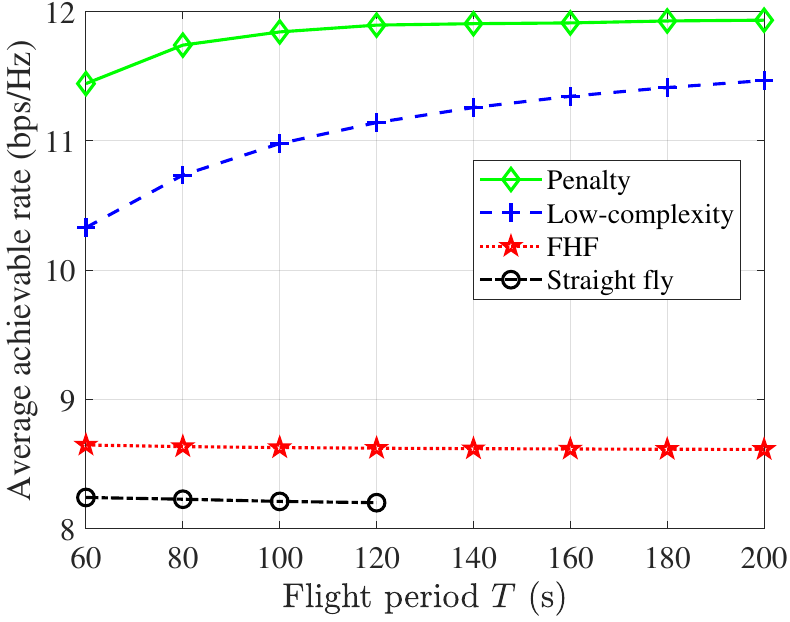}
	}
	\caption{Comparisons for proposed penalty-based algorithm and low-complexity algorithm.}
	\label{figure156}
\end{figure*}

Moreover, we show the communication performance difference between the proposed penalty-based algorithm (refers to {\bf{Algorithm} 1}) and the proposed low-complexity algorithm (the proposed algorithm in Section IV-B) under different flight periods $T$ in Fig.~\ref{figure16}. The low-complexity algorithm can achieve a higher gain over the two benchmarks as the flight period increases. Interestingly, the achievable rate gain achieved by the penalty-based algorithm over the low-complexity algorithm will decrease as the flight period increases. In particular, for the proposed low-complexity algorithm, there is only no more than 5$\%$ performance loss as compared to the proposed penalty-based algorithm when the flight period is larger than 200 s. This is due to the derived structural characteristics of the optimal solutions among different ISAC frames. Specifically, for large flight periods, the flight time from the initial location or that to the final location accounts for a smaller proportion of the entire flight period $T$, and the corresponding communication rate is approximate to that without the location constraints.

\subsection{Pathloss Factor For Sensing}
\label{DifferentPathloss}
The effect of different pathloss factors for sensing power, i.e., the exponent of the distance in (\ref{P1}a), is further evaluated in Fig.~\ref{figure13}. Fig.~\ref{figure13a} shows that under the pathloss with the fourth power of the distance, the UAV trajectory shares several turn-backs between the targets and the users under $T=2T_L$ and $\Gamma^{th} = 8 \times 10^{-9}$, which is similar to that in Fig.~\ref{figure7c} but with a much lower beam pattern gain threshold. It can be seen that the pathloss factor mainly affects the distance between the UAV and the target when performing sensing tasks, and has little effect on the overall trajectory trend. Fig.~\ref{figure13b} shows that under high sensing frequency, the achievable rate also decreases in a similar trend with that in Fig.~\ref{figure8} as the beam pattern gain constraints increases; even under low sensing frequency, the achievable rate decreases faster since the UAV needs to perform sensing tasks at a location closer to the targets under the path loss related to the fourth power of distance.
\begin{figure*}[t]
	\centering
	\setlength{\abovecaptionskip}{0.cm}
	\subfigure[UAV trajectory under 4-exponent pathloss factor.]
	{	
		\label{figure13a}
		\includegraphics[width=7.1cm]{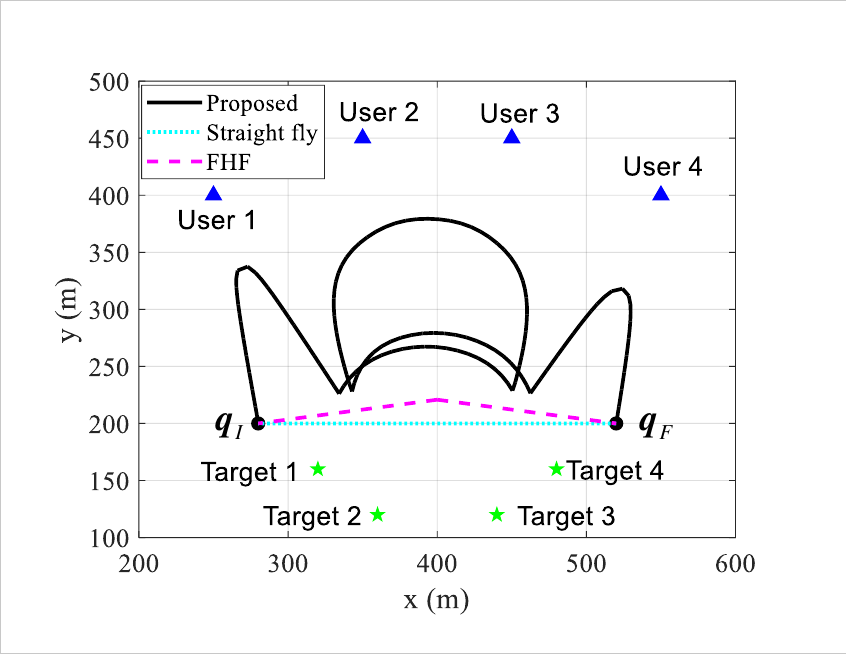}
	} \hspace{15mm}
	\subfigure[Comparisons versus beam pattern gain constraints.]
	{	
		\label{figure13b}
		\includegraphics[width=7.1cm]{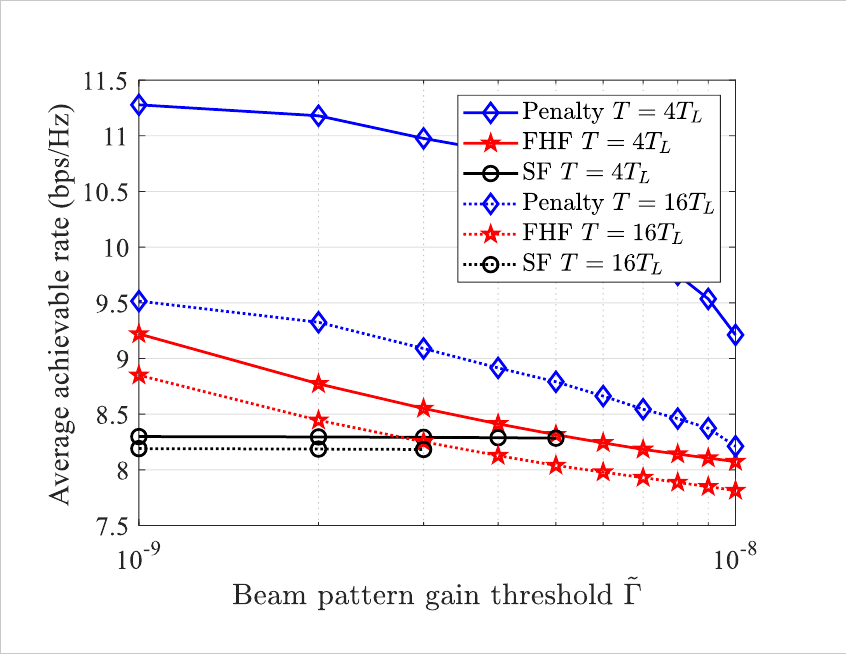}
	}
	\caption{The UAV trajectory and the achievable rate comparison under the pathloss factor with power of 4.}
	\label{figure13}
\end{figure*}

\section{Conclusions and Future Works}
\label{Conclusion}
In this paper, we investigated a new type of UAV-enabled periodic ISAC system. Specifically, the beamforming, user association, sensing time selection, and UAV trajectory were jointly optimized to maximize the sum achievable rate. The closed-form optimal beamforming vector was derived to significantly reduce the complexity of beamforming design, and a lower bound of the achievable rate was presented to facilitate UAV trajectory design. By ignoring the initial and final location constraints, a novel symmetric structure of the optimal solutions among adjacent frames was identified to reveal a fundamental trade-off between sensing frequency and communication capacity. Based on this, a low-complexity method was presented based on our derived structural characteristics. The numerical results validated the efficiency of our design over the benchmark schemes and also confirmed the benefits of the periodic ISAC framework. The more general cases considering the effects caused by imperfectly compensated Doppler for multi-UAV ISAC scenarios are worthwhile future works. In addition, the sensing-assisted communication problems considering the sensing gain and clutter interference will be further investigated in our future work.

\section*{Appendix A: \textsc{Proof of Proposition 1}}
\label{ProveP2}

For $ \frac{{M{P_{\max }}\cos^2 { \varphi _{k,j}}}}{d({\bm{q}}[n], {\bm{v}}_j)^2} \ge  {\Gamma ^{th}}$, We can readily derive that the beam pattern gain at target will be no less than the threshold $\Gamma^{th}$ if the optimal beamforming vector is $\sqrt{P_{\max}}\frac{{\bm{h}}_{c,k}}{\|{\bm{h}}_{c,k}\|}$. In the following, we prove that for $ \frac{{{M{P_{\max }}\cos^2 { \varphi _{k,j}}}}}{d({\bm{q}}[n], {\bm{v}}_j)^2} <  {\Gamma ^{th}}$, the optimal beamforming vector equals to $\frac{1}{{{\lambda _1}}}( {\sqrt {{\beta_{c,k}}}{{\bm{h}}_{c,k}}  + {\lambda _2} \sqrt {{\Gamma^{th}}} {{\bm{h}}_{r,j}}  {e^{ \jmath {\varphi _{k,j}}}}} )$.

First, it can be easily shown that constraint (\ref{P1.1}b) is met with equality for the optimal solution since otherwise $\|{\bm{w}}_c\|$ can be always increased to improve the objective value until (\ref{P1.1}b) becomes active. Hence, constraint (\ref{P1.1}b) can be rewritten as $\|{{\bm{w}}_c}\|^2 = P_{\max}$. Hence, the corresponding Lagrangian function of (\ref{P1.1}) is given by 
\begin{equation}
	\begin{aligned}
	L({\bm{w}}_c,{\lambda _1},{\lambda _2}) \!=\! &   \!-\! {{\bm{w}}_c^H}{{\bm{h}}_{c,k}}{\bm{h}}_{c,k}^H{\bm{w}}_c  \!+\! {\lambda _1}( {{{\| {\bm{w}}_c \|}^2} - {P_{\max }}} ) \\
	&+ {\lambda _2}\left( {{\Gamma^{th}} - {{\bm{w}}_c^H}{{\bm{h}}_{r,j}}{\bm{h}}_{r,j}^H{\bm{w}}_c} \right).		
	\end{aligned}
\end{equation}
We can construct the Karush-Kuhn-Tucker (KKT) conditions for the optimal solution at a feasible point as follows:
\begin{equation}\label{KKTcondition}
	{{\nabla  L({\bm{w}}_c,{\lambda _1},{\lambda _2})}} =  - {{\bm{h}}_{c,k}}{\bm{h}}_{c,k}^H{\bm{w}}_c + {\lambda _1}{\bm{w}}_c - {\lambda _2}{{\bm{h}}_{r,j}}{\bm{h}}_{r,j}^H{\bm{w}}_c = 0,
\end{equation}
\begin{equation}\label{OptimalCondition2}
	\lambda_2\left(\Gamma^{th}  -  {{\bm{w}}_c^H} {\bm{h}}_{r,j} {\bm{h}}_{r,j}^H{\bm{w}}_c\right) = 0.
\end{equation}
From (\ref{KKTcondition}), it can be shown that 
\begin{equation}\label{KKTcondiiton1}
	{{{\bm{h}}_{c,k}}{\bm{h}}_{c,k}^H{\bm{w}}_c{{ + }}{\lambda _2}{{\bm{h}}_{r,j}}{\bm{h}}_{r,j}^H{\bm{w}}_c = {\lambda _1}{\bm{w}}_c}.
\end{equation}
Multiplying both sides of equation (\ref{KKTcondiiton1}) with ${\bm{w}}_c$ leads to
\begin{equation}
	{{\bm{w}}_c^H}{{\bm{h}}_{c,k}}{\bm{h}}_{c,k}^H{\bm{w}}_c{{ + }}{\lambda _2}{{\bm{w}}_c^H}{{\bm{h}}_{r,j}}{\bm{h}}_{r,j}^H{\bm{w}}_c = {\lambda _1}{{\bm{w}}_c^H}{\bm{w}}_c = \lambda_1 P_{\max}.
\end{equation}
Let ${{ {\bm{h}}_{c,k}^H{\bm{w}}_c} } =  \sqrt{{\beta_{c,k}}}e^{j\varphi_{c,k}}$, ${\bm{h}}^H_{r,j}{\bm{w}}_c =  \sqrt{\beta_{r,j}}e^{j\varphi_{r,j}}$, it follows that
\begin{equation}\label{SolvingEquation3}
	{\beta_{c,k}} + {\lambda _2}{\beta_{r,j}} = {\lambda _1}{P_{\max }}.
\end{equation}
Define ${\bm{H}} = [{\bm{h}}_{c,k}, {\bm{h}}_{r,j}]$, by multiplying both sides of equation (\ref{KKTcondiiton1}) with $\left({{\bm{H}}^H{\bm{H}}}\right)^{-1}{\bm{H}}^H$, equation (\ref{KKTcondiiton1}) becomes 
\begin{equation}\label{EquationBetaLamda}
	\begin{aligned}
		&\left[ {\begin{array}{*{20}{c}}
				\sqrt {{\beta_{c,k}}} {e^{j{\varphi_{c,k}}}}\\
				{{\lambda _2}{\sqrt {{\beta_{r,j}}}} {e^{j{\varphi_{r,j}}}}}
		\end{array}} \right] = {\lambda _1}{\left( {{{\bm{H}}^{H}}{\bm{H}}} \right)^{ - 1}}{{\bm{H}}^{H}}{\bm{w}}_c \\
		=& \frac{{{\lambda _1}}}{V_{k,j}}\left[ {\begin{array}{*{20}{c}}
				{{{| {{{\bm{h}}_{r,j}}} |}^2}}&{ - {\bm{h}}_{c,k}^H{{\bm{h}}_{r,j}}}\\
				{ - {{( {{\bm{h}}_{c,k}^H{{\bm{h}}_{r,j}}} )}^H}}&{{{| {{\bm{h}}_{c,k}^H} |}^2}}
		\end{array}} \right]\left[ {\begin{array}{*{20}{c}}
				{\sqrt {{\beta_{c,k}}} {e^{j{\varphi_{c,k}}}}}\\
				{\sqrt {{\beta_{r,j}}}} {e^{j{\varphi_{r,j}}}}
		\end{array}} \right].
	\end{aligned}
\end{equation}
In (\ref{EquationBetaLamda}), $V_{k,j} = {{\left\| {\bm{h}}_{c,k}^H \right\|}^2}{{\left\| {\bm{h}}_{r,j} \right\|}^2} - {\left| {{\bm{h}}_{c,k}^H{{\bm{h}}_{r,j}}} \right|}^2 \ne 0$, otherwise MRT is the optimal beamforming. If $\lambda_2 = 0$, it follows that $\sqrt {{\beta _{r,j}}}  = \frac{{\left\| {{\bm{h}}_{c,k}^H{\bm{h}}_{r,j}} \right\|}}{{{{\left\| {{\bm{h}}_{c,k}^H} \right\|}^2}}}\sqrt {{\beta _{c,k}}} $ according to (\ref{EquationBetaLamda}). By plugging this condition into (\ref{SolvingEquation3}), $\beta_{c,k} = P_{\max} \|{\bm{h}}_{c,k}\|$, which holds if and only if ${\bm{w}^*} = \sqrt{P_{\max}} \frac{{\bm{h}}_{c,k}}{\|{\bm{h}}_{c,k}\|}$. When $\lambda_2 \ne 0$, the KKT condition in (\ref{OptimalCondition2}) can be written as  $ {{\bm{w}}_c^H} {\bm{h}}_{r,j} {\bm{h}}_{r,j}^H{\bm{w}}_c = \beta_ {r,j} = \Gamma^{th}$. Since $\lambda_1$ and $\lambda_2$ are real-valued, equation (\ref{EquationBetaLamda}) can be rewritten as
\begin{equation}\label{SolvingEquation1}
	{\left[ {\begin{array}{*{20}{c}}
				{\sqrt {{\beta_{c,k}}} }\\
				{{\lambda _2}\sqrt {{\Gamma ^{th}}} }
		\end{array}} \right] = \frac{{{\lambda _1}}}{V_{k,j}}\left[ {\begin{array}{*{20}{c}}
				{{{\| {{{\bm{h}}_{r,j}}} \|}^2}\sqrt {{\beta_{c,k}}}  - | {{\bm{h}}_{c,k}^H{{\bm{h}}_{r,j}}} |\sqrt {{\Gamma ^{th}}} }\\
				{{{\| {{\bm{h}}_{c,k}^H} \|}^2}\sqrt {{\Gamma ^{th}}}  - | {{\bm{h}}_{c,k}^H{{\bm{h}}_{r,j}}} |\sqrt {{\beta_{c,k}}} }
		\end{array}} \right]}
\end{equation}
and
\begin{equation}\label{SolvingEquation2}
	{\left[ {\begin{array}{*{20}{c}}
				{\sqrt {{\beta_{c,k}}} }\\
				{{\lambda _2}\sqrt {{\Gamma ^{th}}} }
		\end{array}} \right] = \frac{{{\lambda _1}}}{V_{k,j}}\left[ {\begin{array}{*{20}{c}}
				{{{\| {{{\bm{h}}_{r,j}}} \|}^2}\sqrt {{\beta_{c,k}}}  + | {{\bm{h}}_{c,k}^H{{\bm{h}}_{r,j}}} |\sqrt {{\Gamma ^{th}}} }\\
				{{{\| {{\bm{h}}_{c,k}^H} \|}^2}\sqrt {{\Gamma ^{th}}}  + | {{\bm{h}}_{c,k}^H{{\bm{h}}_{r,j}}} |\sqrt {{\beta_{c,k}}} }
		\end{array}} \right]},
\end{equation}
when $\varphi_{r,j} - \varphi_{c,k} = -  {\varphi _{k,j}} + 2n \pi$ and $\varphi_{r,j} - \varphi_{c,k} = -  {\varphi _{k,j}} + (2n+1) \pi, n \in {\mathbb{Z}}$, respectively. By plugging (\ref{SolvingEquation1}) or (\ref{SolvingEquation2}) into (\ref{SolvingEquation3}), then $\beta_{c,k}$ can be expressed as
\begin{equation}\label{BetaLarge}
	{\beta^+_{c,k}}{{ = }}\frac{{\left\| {{\bm{h}}_{c,k}^H} \right\|^2}}{{\left\| {{{\bm{h}}_{r,j}}} \right\|^2}}{\left( {\sqrt {{\Gamma^{th}}}\cos  {\varphi _{k,j}} {{ + }}{P_j}\sin  {\varphi _{k,j}} } \right)^2}
\end{equation}
or
\begin{equation}
	{\beta^-_{c,k}}{{ = }}\frac{{\left\| {{\bm{h}}_{c,k}^H} \right\|^2}}{{\left\| {{{\bm{h}}_{r,j}}} \right\|^2}}{\left( {\sqrt {{\Gamma^{th}}}\cos  {\varphi _{k,j}} {{ - }}P_j\sin  {\varphi _{k,j}} } \right)^2},
\end{equation}
where $P_j = \sqrt {{P_{\max }}{{\left\| {{{\bm{h}}_{r,j}}} \right\|}^2} - {\Gamma^{th}}}$ and ${ \varphi _{k,j}} = \arccos \frac{|{\bm{h}}_{c,k}^H {\bm{h}}_{r,j}|}{\| {\bm{h}}_{c,k}^H \| \| {\bm{h}}_{r,j} \|}$. Since $ \beta^+_{c,k} >  \beta^-_{c,k}$, the optimal solution to problem in (\ref{P1.1}) can be obtained when $\beta_{c,k} = \beta_{c,k}^+$. Then, by plugging (\ref{BetaLarge}) into (\ref{SolvingEquation1}), we have ${\lambda _1^*} = \frac{\Upsilon {\left\| {{\bm{h}}_{c,k}^H} \right\|}^2{\sin { \varphi _{k,j}} }}{{\sqrt {{P_{\max }}{{\left\| {{{\bm{h}}_{r,j}}} \right\|}^2} - {\Gamma ^{th}}} }}$ and ${\lambda _2^*} = \frac{{\Upsilon  {{{\left\| {{\bm{h}}_{c,k}^H} \right\|}^2}\sqrt {{\Gamma ^{th}}}  - \Upsilon^2 \left\| {{\bm{h}}_{c,k}^H} \right\|\left\| {{{\bm{h}}_{r,j}}} \right\|\cos { \varphi _{k,j}} } }}{{{{\left\| {{{\bm{h}}_{r,j}}} \right\|}^2}\sqrt {{P_{\max }}{{\left\| {{{\bm{h}}_{r,j}}} \right\|}^2}{\Gamma ^{th}} - {{\left( {{\Gamma ^{th}}} \right)}^2}}\sin { \varphi _{k,j}} }}$, where $\Upsilon =  {\sqrt {{\Gamma^{th}}}}\cos  {\varphi _{k,j}} +  {\sqrt {{P_{\max }}{{\left\| {{{\bm{h}}_{r,j}}} \right\|}^2} - {\Gamma^{th}}}}\sin  {\varphi _{k,j}}$.
Hence, the optimal beamforming can be expressed as
\begin{equation}
	{\bm{w}}_c^* = \frac{1}{\lambda_1^*}\left( \sqrt{\beta_{c,k}^+}{\bm{h}}_{c,k} + \lambda_2^* \sqrt{\Gamma^{th}}{\bm{h}}_{r,j} e^{-\jmath  {\varphi _{k,j}}} \right).
\end{equation}
By combining the above results above, Proposition {\ref{OptimalBearfoming}} is finally proved.

\section*{Appendix B: \textsc{Proof of Lemma \ref{InftyAntenaNumber}}}

Let $\Delta \Omega = {\Omega ({{\bm{q}}[n]},{\bm{v}}_j) - \Omega( {{\bm{q}}[n]},{\bm{u}}_k)}$ and $\Delta \Phi = {\Phi ({{\bm{q}}[n]},{\bm{v}}_j) - \Phi ({{\bm{q}}[n]},{\bm{u}}_k)}$. When $\Delta \Omega = 0$ and $\Delta \Phi = 0$, i.e., ${\bm{u}}_k = {\bm{v}}_j$, then $\gamma_0^* = \gamma_0 \frac{{ {M{P_{\max }} } }}{{d({\bm{q}}[n],{\bm{u}}_k)^2}}$. When $\Delta \Omega \ne 0$, and $\Delta \Phi \ne 0$, $|\cos { \varphi _{k,j}}|$ can be recast as shown in (\ref{CosValue}). 
\begin{figure*}[h]
	\begin{equation}\label{CosValue}
		\begin{aligned}
			|\cos { \varphi _{k,j}}| =& \frac{1}{M} \left| {\sum\nolimits_{{m_x} = 1}^{{M_x}} {{e^{\jmath \pi {m_x}\left( {\Phi ({{\bm{q}}[n]},{\bm{v}}_j) - \Phi ({{\bm{q}}[n]},{\bm{u}}_k)} \right)}}\sum\nolimits_{{m_y} = 1}^{{M_y}} {{e^{\jmath \pi {m_y}\left( {\Omega ({{\bm{q}}[n]},{\bm{v}}_j) - \Omega ({{\bm{q}}[n]},{\bm{u}}_k)} \right)}}} } } \right| \\
			=& \frac{1}{M}  \left| {{e^{\jmath \pi M{\Delta \Omega}/2 - \jmath \pi {\Delta \Omega}/2}}\left( {\frac{{{e^{ - \jmath \pi M{\Delta \Omega}/2}} - {e^{\jmath \pi M{\Delta \Omega}/2}}}}{{{e^{ - \jmath \pi {\Delta \Omega}/2}} - {e^{\jmath \pi {\Delta \Omega}/2}}}}} \right)\sum\nolimits_{{m_x} = 1}^{{M_x}} {{e^{\jmath \pi {m_x}\left( {\Phi ({{\bm{q}}[n]},{\bm{v}}_j) - \Phi ({{\bm{q}}[n]},{\bm{u}}_k)} \right)}}} } \right| \\
			= & \left| {\frac{{\sin M_x  \Delta \Phi  \pi /2}}{M_x{ \sin  \Delta \Phi \pi /2}}}   \right|   \left|{\frac{{\sin M_y  \Delta \Omega  \pi /2}}{M_y{ \sin  \Delta \Omega \pi /2}}}\right| .
		\end{aligned}
	\end{equation}
\end{figure*}
When $M_x \to \infty$ or $M_y \to \infty$, $\left| {\frac{{\sin M_x  \Delta  \Phi \pi /2}}{{M_x \sin  \Delta \Phi \pi /2}}}   \right| \left|{\frac{{\sin M_y  \Delta \Omega  \pi /2}}{{M_y \sin  \Delta \Omega \pi /2}}}\right| = 0$, as i.e., $\cos { \varphi _{k,j}} = 0$. When $\Delta \Omega = 0$ or $\Delta \Phi = 0$, $\cos { \varphi _{k,j}}$ can be transformed into $\left| {\frac{{\sin M_x  \Delta \Phi  \pi /2}}{{M_x \sin  \Delta \Phi \pi /2}}}   \right|$ or $\left|{\frac{{\sin M_y  \Delta \Omega  \pi /2}}{{M_y \sin  \Delta \Omega \pi /2}}}\right|$, respectively. Then, $\cos { \varphi _{k,j}} = 0$ when $M_x \to \infty$ and $M_y \to \infty$. In this case, ${\gamma^*_{k}} = \gamma_0 \frac{{ {M{P_{\max }} - {\Gamma ^{th}}{d({\bm{q}}[n],{\bm{v}}_j)^2}} }}{{d({\bm{q}}[n],{\bm{u}}_k)^2}}$. Thus, (\ref{InftyMBeta}) holds. 

Accordingly, we can readily prove that the optimal horizontal coordinate should be within the line formed by ${\bm{u}}_k$ and ${\bm{v}}_j$, Then, the horizontal distance from UAV to user $k$ is denoted by $x$, and $x \ne 0$ if ${\bm{u}}_k \ne {\bm{v}}_j$. By taking the derivative of $x$ to ${\gamma^*_{k}}$, the following condition holds:
\begin{equation}\label{HoveringLocationEquation}
	{x^2} + \left( {\frac{{M{P_{\max }}}}{{{\Gamma ^{th}}d({\bm{u}}_k,{\bm{v}}_j)}} - d({\bm{u}}_k,{\bm{v}}_j)} \right)x - {H^2} = 0.
\end{equation}
Then, the optimal UAV location can be obtained by solving the equation in (\ref{HoveringLocationEquation}), i.e., $x = \frac{{\sqrt {{Z^2} + 4{H^2}}  - Z}}{2}$, where $Z = \frac{{M{P_{\max }}}}{{{\Gamma ^{th}}D_{k,j}}} - D_{k,j}$. As $\frac{x}{D_{k,j}} = \frac{d({\bm{q}}[n],{\bm{u}}_k)}{d({\bm{u}}_k,{\bm{v}}_j)}$, then the UAV location with maximum achievable rate ${\bm{q}}^*_{k,j} = {\bm{u}}_k + \frac{{\sqrt {{{ Z}^2} + 4{H^2}} - Z }}{{2D_{k,j}}}({{\bm{v}}_j} - {{\bm{u}}_k})$, and thus complete the proof.

\footnotesize  	
\bibliography{mybibfile}

\begin{thebibliography}{10}
\providecommand{\url}[1]{#1}
\csname url@samestyle\endcsname
\providecommand{\newblock}{\relax}
\providecommand{\bibinfo}[2]{#2}
\providecommand{\BIBentrySTDinterwordspacing}{\spaceskip=0pt\relax}
\providecommand{\BIBentryALTinterwordstretchfactor}{4}
\providecommand{\BIBentryALTinterwordspacing}{\spaceskip=\fontdimen2\font plus
\BIBentryALTinterwordstretchfactor\fontdimen3\font minus
  \fontdimen4\font\relax}
\providecommand{\BIBforeignlanguage}[2]{{%
\expandafter\ifx\csname l@#1\endcsname\relax
\typeout{** WARNING: IEEEtran.bst: No hyphenation pattern has been}%
\typeout{** loaded for the language `#1'. Using the pattern for}%
\typeout{** the default language instead.}%
\else
\language=\csname l@#1\endcsname
\fi
#2}}
\providecommand{\BIBdecl}{\relax}
\BIBdecl

\bibitem{Hassanien2019Dual}
A.~Hassanien \emph{et~al.}, ``Dual-function radar communication systems: A
  solution to the spectrum congestion problem,'' \emph{IEEE Signal Process.
  Mag.}, vol.~36, no.~5, pp. 115--126, Sep. 2019.

\bibitem{Ericsson2020}
Ericsson, ``A research outlook towards {6G},'' \emph{White Paper}.

\bibitem{Li2016Optimum}
B.~Li, A.~P. Petropulu, and W.~Trappe, ``Optimum co-design for spectrum sharing
  between matrix completion based {MIMO} radars and a {MIMO} communication
  system,'' \emph{IEEE Trans. Signal Process.}, vol.~64, no.~17, pp.
  4562--4575, 2016.

\bibitem{Yuan2021Integrated}
W.~Yuan \emph{et~al.}, ``Integrated sensing and communication-assisted
  orthogonal time frequency space transmission for vehicular networks,''
  \emph{IEEE J. Sel. Top. Sign. Proces.}, vol.~15, no.~6, pp. 1515--1528, Nov.
  2021.

\bibitem{Zhang2021Overview}
J.~A. Zhang \emph{et~al.}, ``An overview of signal processing techniques for
  joint communication and radar sensing,'' \emph{IEEE J. Sel. Top. Sign.
  Proces.}, vol.~15, no.~6, pp. 1295--1315, 2021.

\bibitem{Godrich2010Target}
H.~Godrich, A.~M. Haimovich, and R.~S. Blum, ``Target localization accuracy
  gain in {MIMO} radar-based systems,'' \emph{IEEE Trans. Inf. Theory},
  vol.~56, no.~6, pp. 2783--2803, Jun. 2010.

\bibitem{Sturm2011Waveform}
C.~Sturm and W.~Wiesbeck, ``Waveform design and signal processing aspects for
  fusion of wireless communications and radar sensing,'' \emph{Proc. IEEE},
  vol.~99, no.~7, pp. 1236--1259, Jul. 2011.

\bibitem{Oliveira2021Joint}
L.~Giroto~de Oliveira, B.~Nuss, M.~B. Alabd, A.~Diewald, M.~Pauli, and
  T.~Zwick, ``Joint radar-communication systems: Modulation schemes and system
  design,'' \emph{IEEE Trans. Microw. Theory Tech.}, vol.~70, no.~3, pp.
  1521--1551, Mar. 2022.

\bibitem{Liu2018Toward}
F.~Liu, L.~Zhou, C.~Masouros, A.~Li, W.~Luo, and A.~Petropulu, ``Toward
  dual-functional radar-communication systems: Optimal waveform design,''
  \emph{IEEE Trans. Signal Process.}, vol.~66, no.~16, pp. 4264--4279, 2018.

\bibitem{Liu2021Dual}
R.~Liu, M.~Li, Q.~Liu, and A.~L. Swindlehurst, ``Dual-functional
  radar-communication waveform design: A symbol-level precoding approach,''
  \emph{IEEE J. Sel. Top. Sign. Proces.}, vol.~15, no.~6, pp. 1316--1331, Nov.
  2021.

\bibitem{Mishra2019Toward}
K.~V. Mishra, M.~Bhavani~Shankar, V.~Koivunen, B.~Ottersten, and S.~A.
  Vorobyov, ``Toward millimeter-wave joint radar communications: A signal
  processing perspective,'' \emph{IEEE Signal Process. Mag.}, vol.~36, no.~5,
  pp. 100--114, Sep. 2019.

\bibitem{Zhang2019Multibeam}
J.~A. Zhang, X.~Huang, Y.~J. Guo, J.~Yuan, and R.~W. Heath, ``Multibeam for
  joint communication and radar sensing using steerable analog antenna
  arrays,'' \emph{IEEE Trans. Veh. Technol.}, vol.~68, no.~1, pp. 671--685,
  Jan. 2019.

\bibitem{Pin2021Integrated}
D.~K. Pin~Tan, J.~He, Y.~Li, A.~Bayesteh, Y.~Chen, P.~Zhu, and W.~Tong,
  ``Integrated sensing and communication in {6G}: Motivations, use cases,
  requirements, challenges and future directions,'' in \emph{Proc. IEEE
  JC$\&$S}, Feb. 2021.

\bibitem{Wild2021Joint}
T.~Wild, V.~Braun, and H.~Viswanathan, ``Joint design of communication and
  sensing for beyond {5G} and {6G} systems,'' \emph{IEEE Access}, vol.~9, pp.
  30\,845--30\,857, 2021.

\bibitem{Wymeersch2021Integration}
H.~Wymeersch \emph{et~al.}, ``Integration of communication and sensing in {6G}:
  A joint industrial and academic perspective,'' in \emph{Proc. IEEE PIMRC},
  Sep. 2021.

\bibitem{WLANSENSING}
\url{https://www.ieee802.org/11/Reports/tgbf update.htm}, Accessed 2021-3-19.

\bibitem{Zhang2021Design}
Q.~Zhang, X.~Wang, Z.~Li, and Z.~Wei, ``Design and performance evaluation of
  joint sensing and communication integrated system for {5G} mmwave enabled
  {CAVs},'' \emph{IEEE J. Sel. Top. Sign. Proces.}, vol.~15, no.~6, pp.
  1500--1514, Nov. 2021.

\bibitem{Cui2021Integrating}
Y.~Cui, F.~Liu, X.~Jing, and J.~Mu, ``Integrating sensing and communications
  for ubiquitous {IoT}: Applications, trends, and challenges,'' \emph{IEEE
  Netw.}, vol.~35, no.~5, pp. 158--167, Sep./Oct. 2021.

\bibitem{Liu2020JointRadar}
F.~Liu, C.~Masouros, A.~P. Petropulu, H.~Griffiths, and L.~Hanzo, ``Joint radar
  and communication design: Applications, state-of-the-art, and the road
  ahead,'' \emph{IEEE Trans. Commun.}, vol.~68, no.~6, pp. 3834--3862, Jun.
  2020.

\bibitem{Chen2021Radar}
L.~Chen, F.~Liu, W.~Wang, and C.~Masouros, ``Joint radar-communication
  transmission: A generalized {Pareto} optimization framework,'' \emph{IEEE
  Trans. Signal Process.}, vol.~69, pp. 2752--2765, 2021.

\bibitem{Liu2018MIMO}
F.~Liu, C.~Masouros, A.~Li, H.~Sun, and L.~Hanzo, ``{MU-MIMO} communications
  with {MIMO} radar: From co-existence to joint transmission,'' \emph{IEEE
  Trans. Wireless Commun.}, vol.~17, no.~4, pp. 2755--2770, Apr. 2018.

\bibitem{Gupta2016Survey}
L.~Gupta, R.~Jain, and G.~Vaszkun, ``Survey of important issues in {UAV}
  communication networks,'' \emph{IEEE Commun. Surveys Tuts.}, vol.~18, no.~2,
  pp. 1123--1152, 2nd Quart. 2016.

\bibitem{Hua2019Energy}
M.~Hua, Y.~Wang, Q.~Wu, H.~Dai, Y.~Huang, and L.~Yang, ``Energy-efficient
  cooperative secure transmission in multi-{UAV}-enabled wireless networks,''
  \emph{IEEE Trans. Veh. Technol.}, vol.~68, no.~8, pp. 7761--7775, Aug. 2019.

\bibitem{Wu2019Fundamental}
Q.~Wu, L.~Liu, and R.~Zhang, ``Fundamental trade-offs in communication and
  trajectory design for {UAV}-enabled wireless network,'' \emph{IEEE Wireless
  Commun.}, vol.~26, no.~1, pp. 36--44, Feb. 2019.

\bibitem{Hua2018Power}
M.~Hua, Y.~Wang, Z.~Zhang, C.~Li, Y.~Huang, and L.~Yang, ``Power-efficient
  communication in {UAV}-aided wireless sensor networks,'' \emph{IEEE Commun.
  Lett.}, vol.~22, no.~6, pp. 1264--1267, Jun. 2018.

\bibitem{Meng2021Space}
K.~Meng, D.~Li, X.~He, and M.~Liu, ``Space pruning based time minimization in
  delay constrained multi-task {UAV}-based sensing,'' \emph{IEEE Trans. Veh.
  Technol.}, vol.~70, no.~3, pp. 2836--2849, Mar. 2021.

\bibitem{Zhang2020Age}
S.~Zhang, H.~Zhang, Z.~Han, H.~V. Poor, and L.~Song, ``Age of information in a
  cellular internet of {UAV}s: Sensing and communication trade-off design,''
  \emph{IEEE Trans. Wireless Commun.}, vol.~19, no.~10, pp. 6578--6592, Oct.
  2020.

\bibitem{Meng2021Sensing}
K.~Meng, X.~He, L.~Deshi, M.~Liu, and C.~Xu, ``Sensing quality constrained
  packet rate optimization via multi-{UAV} collaborative compression and
  relay,'' in \emph{Proc. IEEE INFOCOM Workshop}, May 2021.

\bibitem{lyu2021joint}
Z.~Lyu, G.~Zhu, and J.~Xu, ``Joint maneuver and beamforming design for
  {UAV}-enabled integrated sensing and communication,'' \emph{arXiv preprint
  arXiv:2110.02857}, 2021.

\bibitem{Chen2020Performance}
X.~Chen, Z.~Feng, Z.~Wei, F.~Gao, and X.~Yuan, ``Performance of joint
  sensing-communication cooperative sensing {UAV} network,'' \emph{IEEE Trans.
  Veh. Technol.}, vol.~69, no.~12, pp. 15\,545--15\,556, Dec. 2020.

\bibitem{wang2021qos}
Z.~Wang, R.~Liu, Q.~Liu, and L.~Han, ``{QoS}-oriented
  sensing-communication-control co-design for {UAV}-enabled positioning,''
  \emph{arXiv preprint arXiv:2108.09725}, 2021.

\bibitem{wei2021safeguarding}
Z.~Wei, F.~Liu, D.~W. Kwan~Ng, and R.~Schober, ``Safeguarding {UAV} networks
  through integrated sensing, jamming, and communications,'' in \emph{Proc.
  IEEE ICASSP}, May 2022, pp. 8737--8741.

\bibitem{Wu2021Comprehensive}
Q.~Wu, J.~Xu, Y.~Zeng, D.~W.~K. Ng, N.~Al-Dhahir, R.~Schober, and A.~L.
  Swindlehurst, ``A comprehensive overview on {5G}-and-beyond networks with
  {UAV}s: From communications to sensing and intelligence,'' \emph{IEEE J.
  Select. Areas Commun.}, vol.~39, no.~10, pp. 2912--2945, Oct. 2021.

\bibitem{Wu2018Common}
Q.~Wu and R.~Zhang, ``Common throughput maximization in {UAV}-enabled {OFDMA}
  systems with delay consideration,'' \emph{IEEE Trans. Commun.}, vol.~66,
  no.~12, pp. 6614--6627, Dec. 2018.

\bibitem{Garcia2017Direct}
N.~Garcia \emph{et~al.}, ``Direct localization for massive {MIMO},'' \emph{IEEE
  Trans. Signal Process.}, vol.~65, no.~10, pp. 2475--2487, 2017.

\bibitem{Wu2018Joint}
Q.~Wu, Y.~Zeng, and R.~Zhang, ``Joint trajectory and communication design for
  multi-{UAV} enabled wireless networks,'' \emph{IEEE Trans. Wireless Commun.},
  vol.~17, no.~3, pp. 2109--2121, Mar. 2018.

\bibitem{Wu2018Capacity}
Q.~Wu, J.~Xu, and R.~Zhang, ``Capacity characterization of {UAV}-enabled
  two-user broadcast channel,'' \emph{IEEE J. Select. Areas Commun.}, vol.~36,
  no.~9, pp. 1955--1971, Sep. 2018.

\bibitem{kang2010fine}
E.~S. Kang, H.~Hwang, and D.~S. Han, ``A fine carrier recovery algorithm
  robustto doppler shift for {OFDM} systems,'' \emph{IEEE Trans. Consum.
  Electron.}, vol.~56, no.~3, pp. 1218--1222, Aug. 2010.

\bibitem{Xing2009Motion}
M.~Xing, X.~Jiang, R.~Wu, F.~Zhou, and Z.~Bao, ``Motion compensation for {UAV}
  {SAR} based on raw radar data,'' \emph{IEEE Geosci. Remote. Sens. Lett.},
  vol.~47, no.~8, pp. 2870--2883, Aug. 2009.

\bibitem{pieraccini2019doppler}
M.~Pieraccini, L.~Miccinesi, and N.~Rojhani, ``A doppler range compensation for
  step-frequency continuous-wave radar for detecting small {UAV},''
  \emph{Sensors}, vol.~19, no.~6, p. 1331, 2019.

\bibitem{LiuX2020Joint}
X.~Liu, T.~Huang, N.~Shlezinger, Y.~Liu, J.~Zhou, and Y.~C. Eldar, ``Joint
  transmit beamforming for multiuser {MIMO} communications and {MIMO} radar,''
  \emph{IEEE Trans. Signal Process.}, vol.~68, pp. 3929--3944, 2020.

\bibitem{Wang2019Dual}
X.~Wang, A.~Hassanien, and M.~G. Amin, ``Dual-function {MIMO} radar
  communications system design via sparse array optimization,'' \emph{IEEE
  Trans. Aerosp. Electron. Syst.}, vol.~55, no.~3, pp. 1213--1226, Jun. 2019.

\bibitem{Aubry2013Knowledge}
A.~Aubry, A.~DeMaio, A.~Farina, and M.~Wicks, ``Knowledge-aided (potentially
  cognitive) transmit signal and receive filter design in signal-dependent
  clutter,'' \emph{IEEE Trans. Aerosp. Electron. Syst.}, vol.~49, no.~1, pp.
  93--117, 2013.

\bibitem{bertsekas1997nonlinear}
D.~P. Bertsekas, ``Nonlinear programming,'' \emph{J. Oper. Res. Soc.}, vol.~48,
  no.~3, pp. 334--334, 1997.

\bibitem{Cai2017Joint}
Y.~Cai, Q.~Shi, B.~Champagne, and G.~Y. Li, ``Joint transceiver design for
  secure downlink communications over an amplify-and-forward {MIMO} relay,''
  \emph{IEEE Trans. Commun.}, vol.~65, no.~9, pp. 3691--3704, Sep. 2017.

\bibitem{zhang2019securing}
G.~Zhang, Q.~Wu, M.~Cui, and R.~Zhang, ``Securing {UAV} communications via
  joint trajectory and power control,'' \emph{IEEE Trans. Wireless Commun.},
  vol.~18, no.~2, pp. 1376--1389, Feb. 2019.

\end{thebibliography}
\bibliographystyle{IEEEtran}

\end{document}